\newcommand{\fvd}{\mbox{$F\!V\!D$}}
\newcommand{\gch}{\mbox{$G\!C\!H$}}
\def\calL{\mathcal{L}}
\newtheorem{observation}{Observation}
\newtheorem{lemma}{Lemma}
\newtheorem{theorem}{Theorem}
\newenvironment{proof}{\noindent {\textbf{Proof:}}\rm}{\hfill $\Box$ \rm\bigskip}
\title{An Optimal Deterministic Algorithm for Geodesic Farthest-Point Voronoi Diagrams in Simple Polygons\thanks{This research was supported in part by NSF under Grant CCF-2005323. A preliminary version of this paper will appear in Proceedings of the 37th International Symposium on Computational Geometry (SoCG 2021).}}
\author{
Haitao Wang
}
\affil{Department of Computer Science \\
Utah State University, Logan, UT 84322, USA
\\ {\tt haitao.wang@usu.edu}}
\begin{document}
\pagestyle{plain}
\date{}

\thispagestyle{empty}
\maketitle

\vspace{-0.25in}
\begin{abstract}
Given a set $S$ of $m$ point sites in a simple polygon $P$ of $n$ vertices, we consider the problem of computing the geodesic farthest-point Voronoi diagram for $S$ in $P$. It is known that the problem has an $\Omega(n+m\log m)$ time lower bound. Previously, a randomized algorithm was proposed [Barba, SoCG 2019] that can solve the problem in $O(n+m\log m)$ expected time. The previous best deterministic algorithms solve the problem in $O(n\log \log n+ m\log m)$ time [Oh, Barba, and Ahn, SoCG 2016] or in $O(n+m\log m+m\log^2 n)$ time [Oh and Ahn, SoCG 2017]. In this paper, we present a deterministic algorithm of $O(n+m\log m)$ time, which is optimal. This answers an open question posed by Mitchell in the Handbook of Computational Geometry two decades ago.
\end{abstract}

\section{Introduction}
\label{sec:intro}

Let $P$ be a simple polygon of $n$ vertices in the plane. Let $S$ be a set of $m$ points, called {\em sites}, in $P$ (each site can be either in the interior or on the boundary of $P$). For any two points in $P$, their {\em geodesic distance} is the length of their Euclidean shortest path in $P$. We consider the problem of computing the geodesic farthest-point Voronoi diagram of $S$ in $P$, which is to partition $P$ into Voronoi cells such that all points in the same cell have the same farthest site in $S$ with respect to the geodesic distance.

This problem generalizes the Euclidean farthest Voronoi diagram of $m$ sites in the plane, which can be computed in $O(m\log m)$ time~\cite{ref:PreparataCo85}; this is optimal as $\Omega(m\log m)$ is a lower bound. For the more general geodesic problem in $P$, Aronov et al.~\cite{ref:AronovTh93} showed that the complexity of the diagram is $\Theta(n+m)$ and provided an $O(n\log n+m\log m)$ time algorithm. The runtime is close to optimal as $\Omega(n+m\log m)$ is a lower bound. No progress had been made for over two decades until in SoCG 2016 Oh et al.~\cite{ref:OhTh20} proposed an $O(n\log\log n+m\log m)$ time algorithm. Later in SoCG 2017 Oh and Ahn~\cite{ref:OhVo20} gave another $O(n+m\log m+m\log^2 n)$ time algorithm and in SoCG 2019 Barba~\cite{ref:BarbaOp19} presented a randomized algorithm that can solve the problem in $O(n+m\log m)$ expected time.

In this paper, we give an $O(n+m\log m)$ time deterministic algorithm, which is optimal. The space complexity of the algorithm is $O(n+m)$. This answers an open question posed by Mitchell~\cite{ref:MitchellGe00} in the Handbook of Computational Geometry two decades ago.

\subsection{Related work}

If all sites of $S$ are on the boundary of $P$, then better results exist. The algorithm of Oh et al.~\cite{ref:OhTh20} can solve the problem in $O((n+m)\log\log n)$ time while the randomized algorithm of Barba~\cite{ref:BarbaOp19} runs in $O(n+m)$ expected time.

The geodesic nearest-point Voronoi diagram for sites in a
simple polygon has also attracted much attention. The problem also has an
$\Omega(n+m\log m)$ time lower bound. The first close-to-optimal
algorithm was given by Aronov~\cite{ref:AronovOn89} and the running
time is $O((n+m)\log(n+m)\log n)$. Papadopoulou
and Lee~\cite{ref:PapadopoulouA98} improved the algorithm to
$O((n+m)\log (n+m))$ time. Recent progress has been made by Oh and
Ahn~\cite{ref:OhVo20} who presented an $O(n+m\log m\log^2 n)$ time
algorithm and also by Liu~\cite{ref:LiuA20} who designed an
$O(n+m(\log m + \log^2 n))$ time algorithm. Finally the problem was
solved optimally in $O(n+m\log m)$ time by Oh~\cite{ref:OhOp19}.

Another closely related problem is to compute the {\em geodesic
center} of a simple polygon $P$, which is a point in $P$ that
minimizes the maximum geodesic distance from all points of $P$. Asano
and Toussaint~\cite{ref:AsanoCo85} first gave an $O(n^4\log n)$ time
algorithm for the problem. Pollack et al.~\cite{ref:PollackCo89}
derived an $O(n\log n)$ time algorithm. Recently the problem was
solved optimally in $O(n)$ time by Ahn et al.~\cite{ref:AhnA16}. The
{\em geodesic diameter} of $P$ is the largest geodesic distance between any two points in $P$. Chazelle~\cite{ref:ChazelleA82} first gave an $O(n^2)$ time algorithm and then Suri~\cite{ref:SuriCo89} presented an improved $O(n\log n)$ time solution. Hershberger and Suri~\cite{ref:HershbergerMa97} finally solved the problem in $O(n)$ time.

All above results are for simple polygons. For polygons with holes, the problems become more difficult. The geodesic nearest-point Voronoi diagram for $m$ point sites in a polygon with holes of $n$ vertices can be solved in $O((n+m)\log (n+m))$ time by the algorithm of Hershberber and Suri~\cite{ref:HershbergerAn99}. Bae and Chwa~\cite{ref:BaeTh09} gave an algorithm for constructing the geodesic farthest-point Voronoi diagram and the algorithm runs in $O(nm \log^2(n+m) \log m)$ time. For computing the geodesic diameter in a polygon with holes, Bae et al.~\cite{ref:BaeTh13} solved the problem in $O(n^{7.73})$ or $O(n^7(h+\log n))$ time, where $h$ is the number of holes. For computing the geodesic center, Bae et al.~\cite{ref:BaeCo19} first gave an $O(n^{12+\epsilon})$ time algorithm, for any constant $\epsilon>0$; Wang~\cite{ref:WangOn18} solved the problem in $O(n^{11}\log n)$ time.

\subsection{Our approach}

We follow the algorithm scheme in \cite{ref:OhVo20}, which in turn
follows that in~\cite{ref:AronovTh93}. Specifically, we first
compute the geodesic convex hull of all sites of $S$ in $O(n+m\log m)$
time~\cite{ref:GuibasOp89,ref:HershbergerA91,ref:ToussaintCo89}, and then
compute the geodesic center $c^*$ of the hull in $O(n+m)$ time~\cite{ref:AhnA16}.
Aronov~\cite{ref:AronovTh93} showed that the farthest Voronoi diagram
forms a tree with $c^*$ as the root and all leaves on $\partial P$, the boundary of
$P$. We construct the farthest Voronoi diagram restricted to
$\partial P$; this can
be done in $O(n+m)$ time by a recent algorithm of Oh et
al.~\cite{ref:OhTh20} once the geodesic convex hull of $S$ is known.

Next we run a {\em reverse geodesic sweeping} algorithm to extend the diagram from
$\partial P$ to the interior of $P$ (i.e., based on all
leaves on $\partial P$ and the root $c^*$ of the tree, we want to construct the tree). Here we use a {\em geodesic sweeping circle} that consists of all points with the
same geodesic distance from $c^*$. Aronov~\cite{ref:AronovTh93}
implemented this sweeping algorithm in $O((n+m)\log (n+m))$ time. Oh
and Ahn~\cite{ref:OhVo20} gave an improved solution of $O(n+m\log
m+m\log^2 n)$ time by using a data structure for the following query
problem: Given three points in $P$, compute the point that is equidistant
from them. Oh and Ahn~\cite{ref:OhVo20} built a data structure
in $O(n)$ time that can answer each query in $O(\log^2 n)$ time, and
that is why the time complexity of their algorithm has a $\log^2 n$ factor. We
improve the query time to $O(\log n)$ (with $O(n)$ time preprocessing) with the help of the following observations.
First, the three points involved in a query are
three sites of $S$ whose Voronoi cells are adjacent along the sweeping circle.
Second, among the three sites involved in a query, for every two sites whose Voronoi cells are adjacent, the
sweeping algorithm provides us with a point equidistant to them. These observations along with the tentative
prune-and-search technique of Kirkpatrick and
Snoeyink~\cite{ref:KirkpatrickTe95} lead us to a query algorithm
of $O(\log n)$ time. Consequently, the sweeping algorithm can be
implemented in $O(n+m\log m)$ time.

We should point out that in her
algorithm for computing the geodesic nearest-point Voronoi diagram,
Oh~\cite{ref:OhOp19} also announced an $O(\log n)$ time algorithm for
the above query problem and her algorithm also uses the tentative
prune-and-search technique (although the details are omitted due to the page limit).
However, the difference is that she uses a balanced geodesic
triangulation~\cite{ref:ChazelleRa94} and her result is based
on the assumption that the sought point of the query lies in a known
geodesic triangle $\triangle$ and the three query points are in the same subpolygon of $P$ separated by a side of $\triangle$ (see Lemma~4.2 in~\cite{ref:OhOp19}). For our
problem, we do not need the balanced geodesic triangulation and
do not have such an assumption. Instead, our algorithm relies on the
observations mentioned above.


The rest of the paper is organized as follows. Section~\ref{sec:pre} defines notation and introduces some concepts. The algorithm for constructing the geodesic Voronoi diagram is described in Section~\ref{sec:algo}. Section~\ref{sec:lemma} presents the algorithm for a lemma about the query problem discussed above.

\section{Preliminaries}
\label{sec:pre}

Like the previous work~\cite{ref:AronovTh93,ref:BarbaOp19,ref:OhTh20,ref:OhVo20}, for ease of discussion, we make a general position assumption that no vertex of $P$ is equidistant from two sites of $S$ and no point of $P$ has four farthest sites. We occasionally use {\em polygon vertex} to refer to a vertex of $P$ and use {\em polygon edge} to refer to an edge of $P$.

For any two points $p$ and $q$ in $P$, let $\pi(p,q)$ denote the (Euclidean) shortest path from $p$ to $q$ in $P$; let $d(p,q)$ denote the length of $\pi(p,q)$. $\pi(p,q)$ is also called the {\em geodesic path} and $d(p,q)$ is called the {\em geodesic distance} between $p$ and $q$. The vertex of $\pi(p,q)$ adjacent to $q$ (resp., $p$) is called the {\em anchor} of $q$ (resp., $p$) in $\pi(p,q)$.

For any two points $a$ and $b$ in the plane, denote by $\overline{ab}$ the line segment with $a$ and $b$ as endpoints, and denote by $|\overline{ab}|$ the length of the segment.


For any two sites $s$ and $t$ of $S$, their {\em bisector}, denoted by $B(s,t)$, consists of all points of $P$ equidistant from them, i.e., $B(s,t)=\{p\ |\ d(s,p)=d(t,p), p\in P\}$. Due to the general position assumption, Aronov et al.~\cite{ref:AronovOn89} showed that $B(s,t)$ is a smooth curve connecting two points on $\partial P$ with no other points common with $\partial P$ and $B(s,t)$ comprises $O(n)$ straight and hyperbolic arcs (a straight arc is a line segment); the endpoints of the arcs are {\em breakpoints}, each of which is the intersection of $B(s,t)$ and a segment extended from a polygon vertex $u$ to another polygon vertex $v$ such that $u$ is an anchor of $v$ in $\pi(s,v)$ or in $\pi(t,v)$ (it is possible that $u=s$ or $u=t$); e.g., see Fig.~\ref{fig:bisector}.

For any site $s\in S$, define $C(s)$ as the region consisting of all points $p$ of $P$ whose farthest site is $s$, i.e., $C(s)=\{p\ |\ d(p,s)\geq d(p,s'), s'\in S\}$. We call $C(s)$ the (farthest) {\em Voronoi cell} of $s$. Note that $C(s)$ may be empty; if $C(s)$ is not empty, then it is simply connected~\cite{ref:AronovTh93}.
The Voronoi cells of all sites of $S$ form a partition of $P$. We define the {\em geodesic farthest-point Voronoi diagram} (or {\em farthest Voronoi diagram} for short), denoted by $\fvd(S)$, as the closure of the interior of $P$ minus the union of the interior of $C(s)$ for all $s\in S$; alternatively, $\fvd(S)=\{p\in B(s,t) \ | \ s, t \in S \text{ and } d(s,p)=\max_{r\in S}d(r,p)\}$.
A point $v$ of $\fvd(S)$ is a {\em Voronoi vertex} if it is an intersection of a bisector with $\partial P$ or if it has degree $3$ (i.e., it has three equidistant sites). The curve of $\fvd(S)$ connecting two adjacent vertices is called a {\em Voronoi edge}, which is a portion of a bisector of two sites. Note that a Voronoi edge may not be of constant size because it may contain multiple breakpoints. While $\fvd(S)$ has $O(m)$ Voronoi vertices and edges, the total complexity of $\fvd(S)$ is $O(n+m)$~\cite{ref:AronovTh93}.

A subset $P'$ of $P$ is {\em geodesically convex} if $\pi(p,q)$ is in $P'$ for any two points $p$ and $q$ in $P'$. The {\em geodesic convex hull} of $S$ in $P$, denoted by $\gch(S)$, is the common intersection of all geodesically convex sets containing $S$. $\gch(S)$ is a weakly simple polygon of at most $n+m$ vertices.
Let $c^*$ be the geodesic center of $\gch(S)$, which
is also the geodesic center of $S$~\cite{ref:AronovTh93}. Note that $c^*$ must be on a Voronoi edge of $\fvd(S)$. Indeed, if $c^*$ has three farthest sites in $S$, then $c^*$ is
a Voronoi vertex; otherwise it has two farthest sites and thus is
in the interior of an edge of $\fvd(S)$. Aronov~\cite{ref:AronovTh93}
proved that $\fvd(S)$ is a tree with $c^*$ as the root and all leaves on $\partial P$; 
he also showed that only sites on the boundary of $\gch(s)$ have nonempty cells in $\fvd(S)$ and the ordering of the sites with nonempty cells around the boundary of $\gch(s)$ is the same as the ordering of their Voronoi cells around $\partial P$ (the {\em ordering lemma}).
Note that a site on the boundary of $\gch(s)$ may still have an empty Voronoi cell and intuitively this is because $P$ is not large enough~\cite{ref:AronovTh93}.

Consider any three points $s,t,r$ in $P$. The vertex farthest to $s$ in $\pi(s,t)\cap \pi(s,r)$ is called the {\em junction vertex} of $\pi(s,t)$ and $\pi(s,r)$. The closure of the interior of the geodesic convex hull $\gch(s,t,r)$ is called the {\em geodesic triangle} of $s$, $t$, and $r$, denoted by $\triangle(s,t,r)$, whose boundary is composed of three convex chains $\pi(s',t')$, $\pi(t',r')$, $\pi(r',s')$, where $s'$ is the junction vertex of $\pi(s,t)$ and $\pi(s,r)$, and $t'$ and $r'$ are defined likewise; e.g., see Fig.~\ref{fig:triangle}. The three convex chains are called {\em sides} of $\triangle(s,t,r)$. The three vertices $s'$, $t'$, and $r'$ are called the {\em apexes} of $\triangle(s,t,r)$.

\begin{figure}[t]
\begin{minipage}[t]{0.49\textwidth}
\begin{center}
\includegraphics[height=1.8in]{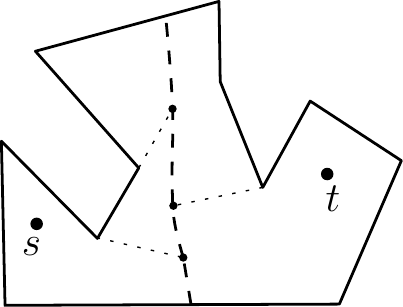}
\caption{\footnotesize Illustrating the bisector $B(s,t)$ (the dashed curve) with three breakpoints.}
\label{fig:bisector}
\end{center}
\end{minipage}
\hspace{0.05in}
\begin{minipage}[t]{0.49\textwidth}
\begin{center}
\includegraphics[height=1.8in]{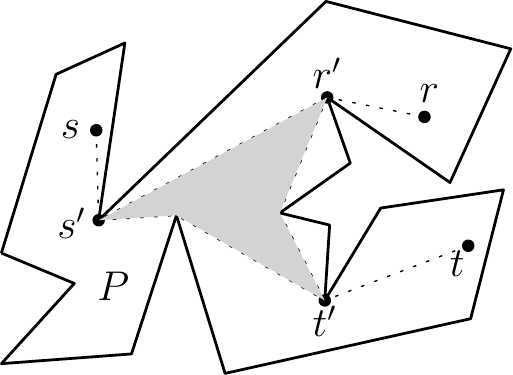}
\caption{\footnotesize Illustrating a geodesic triangle $\triangle(s,t,r)$ (the gray region).}
\label{fig:triangle}
\end{center}
\end{minipage}
\vspace{-0.15in}
\end{figure}


\section{Computing the farthest Voronoi diagram {\boldmath $\fvd(S)$}}
\label{sec:algo}

In this section, we present our algorithm for computing the farthest Voronoi diagram $\fvd(S)$.

First, we compute the geodesic convex hull $\gch(S)$ of $S$ in
$O(n+m\log m)$ time~\cite{ref:GuibasOp89,ref:HershbergerA91,ref:ToussaintCo89}.
Second, we compute the geodesic center $c^*$ of $\gch(S)$ in $O(n+m)$
time~\cite{ref:AhnA16}. Third, we compute the portion of $\fvd(S)$
restricted to the polygon boundary $\partial P$, i.e., the leaves of
$\fvd(S)$. This can be done in $O(n+m)$ time by the algorithm in~\cite{ref:OhTh20}.\footnote{Note that the result was not explicitly given in~\cite{ref:OhTh20} but can be obtained from their $O(n\log\log n+m\log m)$-time algorithm for computing $\fvd(S)$. Indeed, given $\gch(S)$, the algorithm first partitions $P$ in $O(n+m)$ time into $O(1)$ subpolygons such that each subpolygon $P'$ is for a problem instance where all involved sites are on the boundary of $P'$ (see Section 7~\cite{ref:OhTh20}). Then, each problem instance is further reduced in linear time to a problem instance where all sites are vertices of $P'$ (see Section 6~\cite{ref:OhTh20}), and each such problem instance can be solved in linear time (see Section 3~\cite{ref:OhTh20}).
The total running time of all above is $O(n+m)$ (for computing $\fvd(S)$ restricted to the boundary of $P$ only). This result was also used by Oh and Ahn~\cite{ref:OhVo20} in their $O(n+m\log m+m\log^2 n)$-time algorithm for computing $\fvd(S)$.}
The fourth step is to extend the diagram to the interior of $P$, i.e., construct the tree $\fvd(S)$ based on all
its leaves and the root $c^*$. This is achieved by a reverse
geodesic sweeping algorithm, whose details are described below.

The algorithm first computes the adjacency information of
$\fvd(S)$. Specifically, we will compute the locations of all Voronoi
vertices of $\fvd(S)$; for Voronoi edges, however, we will not compute
them exactly (i.e., the locations of their breakpoints will
not be computed) but only output their incident Voronoi vertices,
i.e., if $u$ and $v$ are two Voronoi vertices incident to the same
Voronoi edge, then we will output the pair $(u,v)$ as an {\em abstract} Voronoi edge. In this way, we
will output the abstract tree $\fvd(S)$ with the exact locations of
all Voronoi vertices; this is called the {\em topological structure} of
$\fvd(S)$ in~\cite{ref:OhVo20}. After having the topological
structure, Oh and Ahn~\cite{ref:OhVo20} gave an algorithm that can
construct $\fvd(S)$ in additional $O(n+m\log m)$ time. More
specifically, with $O(n)$ time preprocessing, each Voronoi edge can be
computed in $O(\log n + k)$ time, where $k$ is the number of
breakpoints in the Voronoi edge (see Section~4 of~\cite{ref:OhVo20}
for details). As $\fvd(S)$ has $O(m)$ Voronoi edges, the total time
for computing all Voronoi edges is $O(m\log n + K)$, where $K$ is the
	total number of breakpoints on all Voronoi edges. As $K=O(n+m)$~\cite{ref:AronovTh93}, the
		total time is bounded by $O(n+m\log n)$, which is $O(n+m\log
		m)$.\footnote{Indeed, if $m<n/\log n$, then $n+m\log
		n=\Theta(n)$, which is $O(n+m\log m)$; otherwise, $\log
		n=O(\log m)$ and $n+m\log n=O(n+m\log m)$.}
In the following, we will focus on computing the topological structure
of $\fvd(S)$.

We use a reverse geodesic sweeping as in~\cite{ref:AronovTh93,ref:OhVo20}. Roughly speaking, the sweep line is a
{\em geodesic circle} $C$ consisting of all points in $P$ that have the same geodesic
distance from the geodesic center $c^*$ of $S$. This statement is actually not quite accurate as
initially the sweep circle is just the boundary of $P$. During the
sweeping, we maintain the sites whose Voronoi cells currently
intersect $C$; these sites are stored in a cyclic linked list $\calL$ ordered
by their intersections with $C$. Initially when $C=\partial P$, as we already have the leaves of
$\fvd(S)$, we can build $\calL$ in $O(n+m)$ time. Note that
$|\calL|=O(m)$. During the algorithm, $C$ will
shrink until $c^*$; an event happens when $C$ hits a Voronoi vertex, which will be computed on the fly.
Specifically, for each triple of adjacent sites $s,t,r$ in the list
$\calL$, we compute the point, denoted by $\alpha(s,t,r)$, equidistant from them, which is
the intersection of the bisectors $B(s,t)$ and $B(t,r)$. Due to our general position assumption, $\alpha(s,t,r)$ is unique if it exists (see Lemma~2.5.3~\cite{ref:AronovTh93}).
We store all
these $\alpha$-points in a priority queue $Q$, ordered by decreasing geodesic
distance from $c^*$. In order to compute the $\alpha$-points, for any pair
of adjacent sites $s$ and $t$ in $\calL$, we maintain a Voronoi vertex, denoted by $\beta(s,t)$, on their bisector $B(s,t)$ with the following property: $\beta(s,t)$ is outside or on the current geodesic circle $C$. Initially, we set $\beta(s,t)$ to be the Voronoi vertex on $\partial P$ incident to the Voronoi cells of $s$ and $t$; so the above property holds as $C=\partial P$.

The main loop of the algorithm works as follows.
As long as $Q$ is not empty, we repeatedly extract the point with
largest geodesic distance from $c^*$ and let
the point be $\alpha(s,t,r)$ defined by three sites $s,t,r$ in this
order in $\calL$. We report $\alpha(s,t,r)$ as a Voronoi vertex and report
$(\beta(s,t),\alpha(s,t,r))$ and $(\beta(t,r),\alpha(s,t,r))$ as two
abstract Voronoi edges.
We remove $t$ from $\calL$ and set $\beta(s,r)=\alpha(s,t,r)$. Let $x$ be the neighbor of
$s$ other than $r$ in $\calL$ and let $y$ be the neighbor of $r$
other than $s$. We remove $\alpha(x,s,t)$ and $\alpha(t,r,y)$
from $Q$ if they exist. Next, we compute $\alpha(x,s,r)$ and $\alpha(s,r,y)$ (if exist) as well as their geodesic distances from $c^*$, and insert them into $Q$.

For the running time, there are $O(m)$ events, for the total number of Voronoi vertices of $\fvd(S)$ is $O(m)$~\cite{ref:AronovTh93}, and thus the total time of the algorithm is
$O(m\cdot \sigma)$, where $O(\sigma)$ is the time for computing each $\alpha$ point. Lemma~\ref{lem:10}, which will be proved later in Section~\ref{sec:lemma}, is for computing the $\alpha$-points.

\begin{lemma}\label{lem:10}
With $O(n)$ time preprocessing, for any triple of adjacent sites $s,t,r$ in $\calL$ at any moment during the algorithm, given the two Voronoi vertices $\beta(s,t)$ and $\beta(t,r)$, our algorithm can do the following in $O(\log n)$ time: if $\alpha(s,t,r)$ is a Voronoi vertex, then compute it; otherwise, either compute $\alpha(s,t,r)$ or return null.
\end{lemma}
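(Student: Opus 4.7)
The plan is to combine standard $O(n)$-time geodesic preprocessing of $P$ with an implicit representation of the bisectors $B(s,t)$ and $B(t,r)$, and then apply a tentative prune-and-search that walks simultaneously inward from the given Voronoi vertices $\beta(s,t)$ and $\beta(t,r)$ toward the candidate intersection $\alpha(s,t,r)$. The preprocessing triangulates $P$ and builds Guibas--Hershberger-style shortest-path and ray-shooting data structures in $O(n)$ time, which allows, for any query point $q\in P$ and any site $s\in S$, the anchor of $q$ in $\pi(s,q)$ and the geodesic distance $d(s,q)$ to be recovered in $O(\log n)$ time. These primitives in turn let us decide in $O(\log n)$ time whether a proposed candidate point is equidistant from a pair or from the triple of the sites in question.

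Next I would use the combinatorial structure of the bisectors laid out in the excerpt. Each of $B(s,t)$ and $B(t,r)$ is a chain of $O(n)$ straight or hyperbolic arcs separated by breakpoints, and each breakpoint is induced by an anchor change drawn from the shortest-path trees rooted at the sites. Walking along $B(s,t)$ in the direction of decreasing geodesic distance from $c^*$, the sequence of anchors of $s$ and of $t$ traces a monotone path in each of the two shortest-path trees; hence the portion of $B(s,t)$ between $\beta(s,t)$ and where it might meet $B(t,r)$ can be accessed implicitly through these two tree paths, and any individual arc of the chain can be materialized (its defining anchors, its equation, and its endpoints) in $O(\log n)$ time from the preprocessed data. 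The same representation is available for $B(t,r)$ starting from $\beta(t,r)$.

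Then I would run the tentative prune-and-search of Kirkpatrick and Snoeyink on the two arc chains in parallel. In each round I pick a tentative middle arc on each chain, intersect the pair in $O(1)$ time, and use the preprocessed anchor queries to compare the three geodesic distances from $s$, $t$, and $r$ at the candidate point. The comparison either certifies that $\alpha(s,t,r)$ lies strictly on one side of the two middle arcs, in which case we prune the other side on both chains, or makes a tentative commitment that will be retracted later if a subsequent comparison contradicts it. The Kirkpatrick--Snoeyink analysis bounds the amortized cost per round by $O(1)$, so after $O(\log n)$ rounds the search has either narrowed to a single pair of arcs from which $\alpha(s,t,r)$ is computed in $O(1)$ time, or has certified that the two chains do not cross in the relevant region, in which case we return null. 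To distinguish the case in which $\alpha(s,t,r)$ is a genuine Voronoi vertex from the case in which it is merely an intersection of two bisectors, a final $O(\log n)$-time check verifies that no fourth site of $S$ is strictly farther from the computed point.

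The main obstacle, as I see it, is the design of the comparison predicate that drives the prune-and-search. In the original Kirkpatrick--Snoeyink setting the two chains share a natural coordinate along which their elements are totally ordered, and here the analogous ordering must be the geodesic distance from $c^*$. The crux is to show that the comparison of a tentative arc on $B(s,t)$ with one on $B(t,r)$ can be resolved using only the anchor data and the two given starting points $\beta(s,t)$ and $\beta(t,r)$, which anchor each chain to the ``outer'' side of the current sweep circle and thereby fix a consistent search direction on both; this is exactly the additional observation that the excerpt advertises as the ingredient missing from Oh's earlier query structure, which had to assume an a priori geodesic triangle of containment. Once this predicate is realized in $O(1)$ amortized time per round, the $O(\log n)$ bound, together with the handling of the null case, follows from the standard tentative prune-and-search recurrence.
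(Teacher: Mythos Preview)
Your proposal has a genuine structural gap: you assume that the arcs of the bisectors $B(s,t)$ and $B(t,r)$ can be randomly accessed (``pick a tentative middle arc'') in $O(\log n)$ time after only $O(n)$ preprocessing, but this does not follow from the Guibas--Hershberger data structure. The GH structure gives $O(\log n)$ access to geodesic \emph{paths} $\pi(p,q)$ between two query points; it does not give indexed access to the $O(n)$ breakpoints of a bisector. Your justification---that the anchors along $B(s,t)$ trace monotone paths in the shortest-path trees rooted at $s$ and $t$---would require those shortest-path maps, which cost $\Theta(n)$ per site and hence $\Theta(nm)$ in total, destroying the time bound. Even granting monotonicity, you still need a balanced representation of that anchor sequence to bisect it, and nothing in your preprocessing provides one. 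Your final ``check that no fourth site of $S$ is strictly farther'' is also both unnecessary for the lemma as stated and not doable in $O(\log n)$ time without further machinery.

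The paper takes a different route that avoids searching on bisectors altogether. It first computes the geodesic center $c$ of $\{s,t,r\}$ in $O(\log n)$ time (Lemma~\ref{lem:centerquery}); if $c$ is equidistant from all three, it is $\alpha$. Otherwise, only the case $d(c,s)=d(c,r)>d(c,t)$ can occur (Lemma~\ref{lem:twocases}, which critically uses $\beta(t,r)$ and Observation~\ref{obser:monotone}). The point $\beta(s,t)$ is then used not as a walk origin on $B(s,t)$ but to define a geodesic triangle $\triangle(s,r,\beta(s,t))$ that is shown to contain $\alpha$, with the anchors $p_s,p_t,p_r$ lying on three explicit pseudo-convex chains that are portions of geodesic paths (Lemmas~\ref{lem:30}, \ref{lem:40}, \ref{lem:50}). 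Those chains \emph{are} randomly accessible in $O(\log n)$ via the GH structure (Lemma~\ref{lem:20}), and the Kirkpatrick--Snoeyink tentative prune-and-search on the three chains yields $\alpha$ in $O(\log n)$ time. The essential idea you are missing is this reduction from ``intersect two bisectors'' to ``locate three anchors on three geodesic-path chains,'' which is exactly what makes $O(\log n)$ random access available.
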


We remark that Lemma~\ref{lem:10} is sufficient for the correctness of our geodesic sweeping algorithm as only Voronoi vertices are essential. If the algorithm returns null, the event will not be inserted to $Q$.
With Lemma~\ref{lem:10} at hand, our geodesic sweeping algorithm computes the topological structure of $\fvd(S)$ in $O(n+m\log m)$ time. After that, as discussed above, we can compute the full diagram $\fvd(S)$ in additional $O(n+m\log m)$ time by the techniques of Oh and Ahn~\cite{ref:OhVo20}. Also, the space of the algorithm is bounded by $O(n+m)$.

\begin{theorem}\label{theo:10}
The geodesic farthest-point Voronoi diagram of a set of $m$ points in a simple polygon of $n$ vertices can be computed in $O(n+ m\log m)$ time and $O(n+m)$ space.
\end{theorem}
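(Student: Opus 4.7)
The plan is to verify that the four-step algorithm described in the prose above, taken together, attains the claimed bounds. I would first account for the three preprocessing steps: computing $\gch(S)$ in $O(n+m\log m)$ time, computing the geodesic center $c^*$ of $\gch(S)$ in $O(n+m)$ time, and computing $\fvd(S)$ restricted to $\partial P$ in $O(n+m)$ time. All three steps cite existing algorithms, and together they produce the root $c^*$ and the leaves of the tree $\fvd(S)$ within the target budget.

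The core of the proof is the analysis of the reverse geodesic sweeping algorithm for computing the topological structure of $\fvd(S)$. I would argue as follows. The cyclic list $\calL$ and the priority queue $Q$ are initialized in $O(n+m)$ time from the leaves on $\partial P$. Each iteration of the main loop processes one event $\alpha(s,t,r)$, removes a site from $\calL$, updates $O(1)$ adjacent triples, and performs $O(1)$ insertions/deletions in $Q$ and $O(1)$ invocations of Lemma~\ref{lem:10}. Since $\fvd(S)$ is a tree with $O(m)$ Voronoi vertices by Aronov~\cite{ref:AronovTh93}, the number of events is $O(m)$. Each event thus costs $O(\log m)$ for queue operations and $O(\log n)$ for computing new $\alpha$-points via Lemma~\ref{lem:10}. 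The cumulative cost is $O(n + m\log n + m\log m)$, and the footnote's case analysis (treating $m<n/\log n$ and $m\geq n/\log n$ separately) reduces this to $O(n+m\log m)$.

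For correctness, I would emphasize the role of the invariant that each $\beta(s,t)$ lies outside or on the current sweep circle $C$; combined with the ordering lemma, this guarantees that extracting the $\alpha$-point with maximum geodesic distance from $c^*$ yields the next genuine Voronoi vertex. The ``null'' outcome permitted by Lemma~\ref{lem:10} is benign because it corresponds to triples that cannot produce a Voronoi vertex along the remaining sweep, so omitting them from $Q$ preserves the invariant. Once the topological structure (vertex locations and abstract edges) is in hand, the post-processing of Oh and Ahn~\cite{ref:OhVo20} realizes each Voronoi edge in $O(\log n + k)$ time, where $k$ is the number of breakpoints on that edge; summing over all $O(m)$ edges and using $K=O(n+m)$ total breakpoints yields another $O(n+m\log m)$ term. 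Space is dominated by $\calL$, $Q$, the output diagram, and the $O(n)$-space preprocessing behind Lemma~\ref{lem:10}, all of which fit in $O(n+m)$.

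The main obstacle I anticipate is not in this theorem itself, since the argument above is essentially bookkeeping once all cited black boxes are in place. The real difficulty is deferred to Lemma~\ref{lem:10}: establishing an $O(\log n)$-time $\alpha$-point query with only $O(n)$ preprocessing, without the balanced geodesic triangulation assumption used by Oh~\cite{ref:OhOp19}, and leveraging instead the geometric structure provided by the $\beta(s,t)$ and $\beta(t,r)$ vertices via tentative prune-and-search. Given Lemma~\ref{lem:10} as a black box, the theorem follows by combining the step-by-step running times and space bounds above.
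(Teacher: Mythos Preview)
Your proposal is correct and mirrors the paper's own argument essentially step for step: the same three preprocessing components, the same reverse geodesic sweep with $O(m)$ events costing $O(\log n)$ each via Lemma~\ref{lem:10} plus $O(\log m)$ for priority-queue maintenance, the same $O(n+m\log n)=O(n+m\log m)$ reduction, and the same appeal to Oh--Ahn for realizing the edges from the topological structure. You also correctly identify that all the substantive work is deferred to Lemma~\ref{lem:10}.
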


\section{Algorithm for Lemma~\ref{lem:10}}
\label{sec:lemma}

In this section, we present our algorithm for Lemma~\ref{lem:10}.
We first present an algorithm in Section~\ref{subsec:triquery} for the following {\em triple-point geodesic center query} problem: given any three points in $P$, compute their geodesic center
in $P$, which is a point that minimizes the largest geodesic distance
from the three query points. Oh and Ahn~\cite{ref:OhVo20} solved this problem in $O(\log^2 n)$ time,
after $O(n)$ time preprocessing. Our algorithm runs in $O(\log n)$ time also with $O(n)$ time preprocessing.\footnote{To be fair, this problem is not a dominant one in their algorithm, which might be a reason Oh and Ahn~\cite{ref:OhVo20} did not push their result further.}
This algorithm will be used as a subroutine in our algorithm for Lemma~\ref{lem:10}, which will be discussed in Section~\ref{subsec:lemma}.


\subsection{The triple-point geodesic center query problem}
\label{subsec:triquery}

As preprocessing, we construct the two-point shortest path query data
structure by Guibas and Hershberger~\cite{ref:GuibasOp89,ref:HershbergerA91} and we refer to
it as the {\em GH data structure}. The data structure can be constructed in
$O(n)$ time, after which given any two points $p$ and $q$ in $P$, the
geodesic distance $d(p,q)$ can be computed in $O(\log n)$ time and the
geodesic path $\pi(p,q)$ can be output in additional time linear in
the number of edges of $\pi(p,q)$.

Consider three query points $s$, $t$, and $r$ in $P$. Our goal is to compute
their geodesic center, denoted by $c$. We follow the algorithm scheme in~\cite{ref:OhVo20}. Consider the geodesic convex hull $\gch(s,t,r)$ and the geodesic triangle $\triangle(s,t,r)$. We know that $c$ is the
geodesic center of $\gch(s,t,r)$~\cite{ref:AronovTh93}. Depending on
whether $c$ is in the interior of $\triangle(s,t,r)$, there are two cases.

\subsubsection{$c$ is not in the interior of $\triangle(s,t,r)$}
If $c$ is not in the interior of $\triangle(s,t,r)$, then it must be on the geodesic path of
two points of $\{s, t, r\}$. Without loss of generality, we assume
that $c\in \pi(s,t)$. Note that $c$ must be the middle point of
$\pi(s,t)$. To locate $c$ in $\pi(s,t)$, we wish to do
binary search on the vertices of $\pi(s,t)$. It was claimed in~\cite{ref:OhVo20} that the query algorithm
of the GH data structure returns $\pi(s,t)$ as a binary tree (so that binary search can be done in a straightforward way), in particular, when the simpler approach
in~\cite{ref:HershbergerA91} is utilized. In fact, this is not quite
correct. Indeed, the binary tree structures in both
\cite{ref:GuibasOp89} and \cite{ref:HershbergerA91} are used for
representing convex chains (or more rigorously, {\em semiconvex chains}~\cite{ref:HershbergerA91}).
However, $\pi(s,t)$ is actually a {\em string}~\cite{ref:GuibasOp89}, which in general is not a
semiconvex chain. The data structure for representing a string is a
tree but not a binary tree because a node in the tree may have three
children.

Here for completeness, we provide a general binary search scheme on the geodesic path $\pi(p,q)$ returned by the GH data structure for any two query points $p$ and $q$ in $P$.
Suppose we are looking for either a vertex or an edge of $\pi(p,q)$, denoted by $w^*$ in either case, and we have access to an {\em oracle} such that given
any vertex $v\in \pi(p,q)$, the oracle can determine whether $w^*$ is in
$\pi(p,v)$ or in $\pi(v,q)$. Then, we have the following lemma.

\begin{lemma}\label{lem:20}
With $O(n)$ time preprocessing, given any two query points $p$ and
$q$, the sought vertex or edge $w^*$ can be located by a binary search algorithm that
calls the oracle on $O(\log n)$ vertices of $\pi(p,q)$, and the total
time of the binary search excluding the time for calling the oracle is
$O(\log n)$. In particular, the middle point of $\pi(p,q)$ can be
found in $O(\log n)$ time.
\end{lemma}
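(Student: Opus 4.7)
My plan is to exploit the internal tree representation of $\pi(p,q)$ supplied by the GH data structure and to show that it supports a balanced binary-style descent even though internal nodes may have three children. After the standard $O(n)$-time preprocessing of the GH data structure, a query for $\pi(p,q)$ returns a rooted tree $T_\pi$ whose leaves, read in left-to-right order, are the vertices of $\pi(p,q)$; each internal node has at most three children and the overall depth of $T_\pi$ is $O(\log n)$. At every internal node $v$ with children $c_1,\ldots,c_k$ where $k\le 3$, the $k-1$ vertices of $\pi(p,q)$ that mark the boundaries between the subtrees of consecutive children are accessible in $O(1)$ time as the \emph{split vertices} of $v$.

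Given this structure, the binary search simply descends $T_\pi$ from the root. At each node it invokes the oracle on at most $k-1\le 2$ split vertices to decide which child's subtree contains $w^*$, and then recurses into that subtree. When the descent reaches a leaf, that leaf together with the oracle answers at its parent pinpoint $w^*$ as either a vertex or an edge of $\pi(p,q)$. Since $T_\pi$ has depth $O(\log n)$ and only $O(1)$ non-oracle work is spent per level, the binary search makes $O(\log n)$ oracle calls and takes $O(\log n)$ time excluding oracle invocations.

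For the middle-point application, we first compute $L=d(p,q)$ in $O(\log n)$ time via the GH data structure and set the target length to $L/2$. During the descent of $T_\pi$, we maintain the geodesic length from $p$ to the left boundary of the current subtree by adding, in $O(1)$ per step, precomputed subchain lengths stored at subtree roots. The oracle query ``does $\pi(p,v)$ have length less than $L/2$?'' for any split vertex $v$ is then answered by a single numerical comparison, so each oracle call costs $O(1)$. Consequently, the middle point is located in $O(\log n)$ total time, proving the ``in particular'' part of the lemma.

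The main obstacle is justifying the structural claim that the path tree $T_\pi$ returned by the GH data structure is balanced and that split vertices and subtree lengths are directly accessible in constant time per node. This is a matter of tracing through the construction in~\cite{ref:GuibasOp89,ref:HershbergerA91}: $\pi(p,q)$ is assembled from $O(\log n)$ canonical semiconvex subchains via hourglass/funnel merges, each stored as a balanced binary tree annotated with the subchain length at the root; the three-child nodes arise precisely at these merges, and augmenting each tree node with its two extreme vertices and cumulative geodesic length gives the needed $O(1)$ access to split vertices and prefix lengths. Once this structural fact is pinned down, the descent and complexity analysis above go through directly.
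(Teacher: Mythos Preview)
Your high-level strategy matches the paper's: descend the implicit tree representation of $\pi(p,q)$ supplied by the GH data structure, spending at most two oracle calls per level to pick a child, for $O(\log n)$ calls overall; for the middle point, track the prefix length and compare to $d(p,q)/2$.

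Where you and the paper differ is exactly the part you label ``the main obstacle.'' Your sketch asserts that $\pi(p,q)$ is assembled from $O(\log n)$ canonical semiconvex subchains, each a balanced binary tree with stored lengths. The paper is explicit that this picture is too simple---indeed it opens the proof by pointing out that a prior paper erred in claiming $\pi(p,q)$ is returned as a binary tree. In the GH structure, $\pi(p,q)$ is a \emph{string}, not a semiconvex chain; the query first produces $O(\log\log n)$ tangents separating $O(\log\log n)$ small hourglasses (each of size $O(\log n)$, handled by brute force) from two big hourglasses; inside a big hourglass the path is convex-chain / string / convex-chain; and the string tree has height $O(\log n)$ with each node branching into a left derived string, a middle fundamental string (two convex chains joined by a tangent), and a right derived string. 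Your uniform tree $T_\pi$ is a valid abstraction of what this yields, but your one-sentence justification misidentifies the pieces. The paper also does not rely on precomputed cumulative lengths at every node; it maintains $d(p,v)$ along the descent via the GH structure rather than reading stored subtree totals. None of this breaks your argument, but the structural claim you defer is precisely where the paper spends its effort, and your sketch of it would need correcting.
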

\begin{proof}
We will use notation and concepts from the GH data structure~\cite{ref:GuibasOp89} without much explanation. To represent convex chains, we utilize the simper way given in~\cite{ref:HershbergerA91}, i.e., persistent binary trees with the path-copying method.

Consider the query points $p$ and $q$. The GH query algorithm combines $O(\log\log n)$ ``small'' {\em hourglasses} of size $O(\log n)$ and two ``big'' hourglasses of size $O(n)$ to assemble the path $\pi(p,q)$, which is represented as a {\em string}~\cite{ref:GuibasOp89}. Combining two hourglasses involves computing a tangent between them such that the tangent belongs to $\pi(p,q)$.
Thus, the algorithm will produce $O(\log\log n)$ tangents.
For our problem, we explicitly consider these tangents and call the oracle on every vertex of these tangents. This calls the oracle $O(\log\log n)$ times. After that we can determine an hourglass that contains $w^*$. If it is a small hourglass, then since it has $O(\log n)$ vertices, we can simply call the oracle on every vertex to locate $w^*$. In the following, we assume that $w^*$ is in a big hourglass and let $\pi$ be the portion of $\pi(p,q)$ in the hourglass.

The endpoints of $\pi$ can be obtained during the GH query algorithm. $\pi$ from one end to the other consists of a convex chain, a string, and another convex chain in order. The two connection vertices between the string and the two convex chains can be maintained during the preprocessing. We call the oracle on these two vertices, after which we can determine the one of the three portions of $\pi$ that contains $w^*$. If it is a convex chain, then as a convex chain is represented by a binary tree of height $O(\log n)$~\cite{ref:HershbergerA91}, we can apply binary search on this tree in a standard way; after calling the oracle on $O(\log n)$ vertices, $w^*$ can be obtained. In the following, we assume that the string of $\pi$ contains $w^*$. By slightly abusing notation, we still use $\pi$ to denote the string.

The string $\pi$ is represented by a tree $T$. However, each node of $T$ may have three children. Consider the root $v$ of $T$. In general, $\pi$ is a {\em derived string} and $v$ has three children: a left subtree $L$ representing a derived string, a middle subtree $M$ representing a {\em fundamental string}, and a right subtree $R$ representing another derived string. The fundamental string consists of two convex chains linked by a tangent edge and each derived string consists of two derived strings and a fundamental string in the middle. The height of $T$ is $O(\log n)$. The connecting vertex between the left string and the middle string and the connecting vertex between the middle string and the right string are maintained in the preprocessing and thus available during the query algorithm. We call the oracle on the two connecting vertices and then determine which string contains $w^*$. If the left or the right string contains $w^*$, then we proceed on the corresponding subtree of $v$ recursively. Otherwise, the middle string contains $w^*$. Again, the middle string consists of two convex chains linked by a tangent edge, which is available to us due to the preprocessing. We call the oracle on the two vertices of the tangent edge to determine which convex chain contains $w^*$ (or whether the tangent edge contains $w^*$). After that, since a convex chain is represented by a binary tree of height $O(\log n)$, we can finally locate $w^*$ by calling the oracle on $O(\log n)$ vertices using the binary tree. In this way, after $O(\log n)$ oracle calls, we can reach a fundament string and then $w^*$ can be finally located after another $O(\log n)$ oracle calls.

In summary, by calling the oracle on $O(\log n)$ vertices of $\pi(p,q)$, $w^*$ can be found.

To compute the middle point of $\pi(p,q)$, we first compute the geodesic distance $d(p,q)$ in $O(\log n)$ time by using the GH data structure. Then, we can follow the above binary search scheme and each time when the oracle is called on a vertex $v$, we also keep track of the geodesic distance $d(p,v)$ using the GH data structure. By comparing $d(p,v)$ with $d(p,q)/2$, we can decide which way to proceed the search. In this way, the middle point of $\pi(p,q)$ can be determined in $O(\log n)$ time.
\end{proof}

With Lemma~\ref{lem:20} at hand, we can find $c$ on $\pi(s,t)$ in $O(\log n)$ time.

We can determine whether $c$ is in the interior of $\triangle(s,t,r)$
in $O(\log n)$ time using Lemma~\ref{lem:20} as follows. First, we
determine whether $c$ is the middle point of $\pi(s,t)$. To do so,
we first compute the middle point $p_{st}$ of $\pi(s,t)$ by
Lemma~\ref{lem:20}. Then, we compute $d(s,p_{st})$ and $d(r,p_{st})$ in $O(\log n)$ time using the GH data structure. It is not difficult to see that $p_{st}$ is $c$ if and only if  $d(s,p_{st})\geq d(r,p_{st})$. If
$p_{st}\neq c$, then we use the same way to determine whether the middle point
of $\pi(s,r)$ (resp., $\pi(r,t)$) is $c$. If the above algorithm
fails to locate $c$, then we know that $c$ is in the interior of $\triangle(s,t,r)$.

The above finds $c$ in $O(\log n)$ time for the case where $c$ is not in the interior of $\triangle(s,t,r)$.

\subsubsection{$c$ is in the interior of $\triangle(s,t,r)$}
We proceed to the case where $c$ is in the interior of
$\triangle(s,t,r)$.
Our algorithm utilizes the tentative prune-and-search technique of Kirkpatrick
and Snoeyink~\cite{ref:KirkpatrickTe95}.

First observe that in this case $c$ must be equidistant from all
three points $s$, $t$, and $r$. Let $s'$ be the junction vertex of $\pi(s,t)$ and $\pi(s,r)$. Define $t'$ and $r'$ similarly. With the GH data structure, each junction vertex can be computed in $O(\log n)$ time~\cite{ref:GuibasOp89}.

Define $p_s$, $p_t$, and $p_r$ to be the anchors of $c$ in $\pi(s,c)$, $\pi(t,c)$, and $\pi(r,c)$, respectively. Note that the segment connecting $c$ to $p_s$ (resp., $p_t$, $p_r$) is tangent to the side of $\triangle(s,t,r)$ that contains it.
As $c$ is equidistant to $s$, $t$, and $r$, $c$ is the common intersection of the three bisectors $B(s,t)$, $B(s,r)$, and $B(t,r)$.

\begin{observation}\label{obser:00}
The middle point $p_{st}$ of $\pi(s,t)$ must be in $\pi(s',t')$; the middle point $p_{sr}$ of
$\pi(s,r)$ must be in $\pi(s',r')$; the middle point $p_{tr}$ of $\pi(t,r)$ must be in $\pi(t',r')$.
\end{observation}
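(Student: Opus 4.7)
The three claims are symmetric in the roles of $s$, $t$, $r$, so the plan is to prove only that $p_{st}\in\pi(s',t')$; the other two containments follow by relabeling. Since $s'$ is the junction vertex of $\pi(s,t)$ and $\pi(s,r)$, it lies on $\pi(s,t)$, and likewise $t'$ lies on $\pi(s,t)$. Hence $\pi(s,t)$ decomposes as $\pi(s,s')\cup\pi(s',t')\cup\pi(t',t)$ and $d(s,t)=d(s,s')+d(s',t')+d(t',t)$. The midpoint $p_{st}$ lies on the middle subpath $\pi(s',t')$ if and only if $d(s,s')\le d(s,t)/2$ and $d(t,t')\le d(s,t)/2$; equivalently, iff $d(s,s')\le d(s',t')+d(t,t')$ and $d(t,t')\le d(s',t')+d(s,s')$. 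The problem thus reduces to proving these two triangle-inequality-like bounds.

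To establish them, I would exploit the fact that $c$ lies in the interior of $\triangle(s,t,r)$, which forces the geodesic $\pi(s,c)$ to enter the triangle through the apex $s'$ and $\pi(t,c)$ to enter through $t'$; that is, $d(s,c)=d(s,s')+d(s',c)$ and $d(t,c)=d(t,t')+d(t',c)$. The paragraph preceding the observation already records that $c$ is equidistant from $s$, $t$, and $r$, so equating $d(s,c)$ with $d(t,c)$ yields
\[
d(s,s')-d(t,t')=d(t',c)-d(s',c).
\]
The geodesic triangle inequality $d(t',c)\le d(t',s')+d(s',c)$ then gives $d(s,s')-d(t,t')\le d(s',t')$, i.e., $d(s,s')\le d(s',t')+d(t,t')$, which is the first of the two desired bounds. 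Exchanging $s\leftrightarrow t$ and $s'\leftrightarrow t'$ in the same argument delivers the companion bound $d(t,t')\le d(s',t')+d(s,s')$, placing $p_{st}$ on $\pi(s',t')$.

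The main obstacle is the geometric claim that $\pi(s,c)$ passes through $s'$ whenever $c$ is strictly inside $\triangle(s,t,r)$. I would justify this by a standard property of geodesic triangles in a simple polygon: the paths $\pi(s,t)$ and $\pi(s,r)$ coincide from $s$ up to the junction vertex $s'$ and then branch onto the two sides $\pi(s',t')$ and $\pi(s',r')$, so the interior of $\triangle(s,t,r)$ is separated from $s$ by $s'$. Consequently the shortest path from $s$ to any point in the open interior of $\triangle(s,t,r)$ is forced through $s'$. Once this separation fact is granted, everything else is a single application of the geodesic triangle inequality, so the proof is short.
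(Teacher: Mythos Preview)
Your argument is correct and uses the same ingredients as the paper's proof: the equidistance $d(s,c)=d(t,c)$, the key geometric fact that $\pi(s,c)$ passes through the apex $s'$ when $c$ lies in the interior of $\triangle(s,t,r)$, and a geodesic triangle inequality. The paper argues by contradiction (if $p_{st}\in\pi(s,s')\setminus\{s'\}$ then $d(s,s')>d(t,s')$, whence $d(s,c)=d(s,s')+d(s',c)>d(t,s')+d(s',c)\ge d(t,c)$), which is marginally more economical since it needs only the single fact $d(s,c)=d(s,s')+d(s',c)$ rather than the pair $d(s,c)=d(s,s')+d(s',c)$ and $d(t,c)=d(t,t')+d(t',c)$ you invoke; but this is a cosmetic difference, not a different approach.
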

\begin{proof}
We only prove the case for $p_{st}$ since the other two cases are similar. Assume to the contrary that $p_{st}\not\in \pi(s',t')$. Then, either $p_{st}\in \pi(s,s')\setminus\{s'\}$ or $p_{st}\in \pi(t,t')\setminus\{t'\}$. We assume it is the former case as the analysis for the latter case is similar. Then, $d(s,s')>d(t,s')$. Note that $d(s,c)=d(s,s')+d(s',c)$ as $c$ is in the interior of $\triangle(s,t,r)$. Hence, $d(s,c)>d(t,s')+d(s',c)\geq d(t,c)$. But this incurs contradiction as $c$ is equidistant from $s$ and $t$.
\end{proof}

Since $p_s$ is the anchor of $c$ in $\pi(s,c)$, $p_s$ has smaller geodesic distance from $s$ than from $t$ or $r$. Hence, by Observation~\ref{obser:00}, $p_s$ must be in $\gamma_s=\pi(s',p_{st})\cup \pi(s',p_{sr})$, which consists of two convex chains; we call $\gamma_s$ a {\em pseudo-convex chain}.
Similarly, $p_t$ must be in $\gamma_t=\pi(t',p_{st})\cup \pi(t',p_{tr})$ and $p_r$ must be in $\gamma_r=\pi(r',p_{tr})\cup \pi(r',p_{sr})$. We consider $p_{st}$ and $p_{sr}$ as two ends of $\gamma_s$. If we move a point $p$ on $\gamma_s$ from one end to the other, then the slope of the tangent line of $\gamma_s$ at $p$ continuously changes. Further, $p_s$ is the only point on $\gamma_s$ such that
$\overline{cp_s}$ is tangent to $\gamma_s$. Similar properties
hold for $\gamma_t$, $p_t$, $\gamma_r$, and $p_r$.

With the above discussion, we are now in a position to describe our
algorithm for computing $c$. First, we compute the three junction
vertices and the three middle points $s'$,
$t'$, $r'$, $p_{st}$, $p_{sr}$, and $p_{tr}$. This can be done in
$O(\log n)$ time using the GH data structure.
To compute $c$ (as well as
locate $p_s$, $p_t$, and $p_r$), we resort to the tentative prune-and-search technique~\cite{ref:KirkpatrickTe95}, as follows.

To avoid the lengthy background explanation, we follow the notation
in~\cite{ref:KirkpatrickTe95} without definition.
We will rely on Theorem 3.9 in~\cite{ref:KirkpatrickTe95}. To this
end, we need to define three continuous and monotone-decreasing functions $f$, $g$,
and $h$. We define them in a way similar in spirit to Theorem 4.10
in~\cite{ref:KirkpatrickTe95} for finding a point equidistant to three
convex polygons. Indeed, our problem may be considered as a weighted
case of their problem because each point in our pseudo-convex chains has a
weight that is equal to its geodesic distance from one of $s$, $t$,
and $r$.

\begin{figure}[t]
\begin{minipage}[t]{\textwidth}
\begin{center}
\includegraphics[height=1.8in]{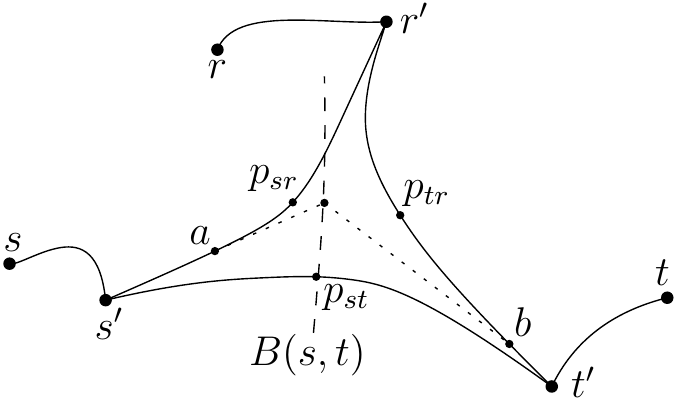}
\caption{\footnotesize Illustrating the geodesic triangle $\triangle(s,t,r)$ and the definition of the function $f(a)$ for $a\in A=\gamma_s$.}
\label{fig:function}
\end{center}
\end{minipage}
\vspace{-0.15in}
\end{figure}

We parameterize over $[0,1]$ each of the three pseudo-convex chains $A=\gamma_s$,
$B=\gamma_t$, and $C=\gamma_r$ from one end to the other in
counterclockwise order around $\triangle(s,t,r)$. For example, without loss
of generality, we assume that $s'$, $t'$, and $r'$ are
counterclockwise around $\triangle(s,t,r)$. Then, $\gamma_s$ is
parameterized from $p_{sr}$ to $p_{st}$ over $[0,1]$, i.e., each
value of $[0,1]$ corresponds to a slope of a tangent at a point on
$\gamma_s$. For each point $a$ of $A$, we define
$f(a)$ to be the parameter of the point $b\in B$ such that the tangent
of $A$ at $a$ and the tangent of $B$ at $b$ intersect at a point on the bisector $B(s,t)$ of $s$ and $t$ (e.g., see Fig.~\ref{fig:function}). Similarly, we define
$g(b)$ for $b\in B$ with respect to $C$ and define $h(c)$ for $c\in C$
with respect to $A$. One can verify that all three functions are
continuous and monotone-decreasing (the tangent at an apex of $\triangle(s,t,r)$ is not unique but the issue can be handled~\cite{ref:KirkpatrickTe95}). The fixed-point of the
composition of the three functions $h\cdot g \cdot f$ corresponds to
$c$, which can be computed by applying the tentative prune-and-search
algorithm of Theorem 3.9~\cite{ref:KirkpatrickTe95}.

To see that the algorithm can be implemented in $O(\log n)$ time,  we need to show that given
any $a\in A$ and any $b\in B$, we can determine whether $f(a)> b$ in $O(1)$
time. To this end, we first find the intersection $p$ of the tangent of $A$ at
$a$ and the tangent of $B$ at $b$. Then, $d(s,a)+|\overline{pa}|<d(t,b)+|\overline{pb}|$ if and only if
$f(a)>b$. We will discuss below that the values $d(s,a)$ and $d(t,b)$ will be available during the tentative prune-and-search algorithm. Note that here the tangent of $A$ at $a$ actually refers to the half-line of the tangent whose concatenation with $\pi(s',a)$ is still a convex chain (so that the shortest path can follow that half-line), as shown in Fig.~\ref{fig:function}. Hence, it is possible that the tangent half-line of
$a$ does not intersect the tangent half-line of $b$. If that happens, either the tangent half-line of
$a$ intersects the backward extension of the tangent half-line of $b$ or the backward extension of the tangent half-line of $a$ intersects the tangent half-line of $b$; in the former case we have $f(a)<b$ and in the latter case $f(a)>b$. Similar properties hold for functions of $g$ and $h$.
Finally, we show that we have appropriate data structures to represent
the three pseudo-convex chains $A$, $B$, and $C$ so that the algorithm can terminate
in $O(\log n)$ rounds. We only discuss $A$ since other two
are similar. When the algorithm picks the first vertex of $A$ to test,
we will use the vertex $s'$. After the test, the algorithm will
proceed on $A$ on one side of $s'$, say, on $\pi(s',p_{st})$.
We apply the binary search scheme of Lemma~\ref{lem:20} on
$\pi(s',p_{st})$, which will test $O(\log n)$ vertices. Further,
whenever a vertex $a\in \pi(s',p_{st})$ is tested, the binary search
scheme of Lemma~\ref{lem:20} can keep track of $d(s,a)$.
Therefore, applying the tentative prune-and-search
technique in Theorem 3.9~\cite{ref:KirkpatrickTe95} can compute the
geodesic center $c$ in $O(\log n)$ time.

The following lemma summarizes our result on the triple-point geodesic
center query problem.

\begin{lemma}\label{lem:centerquery}
With $O(n)$ time preprocessing, the geodesic center of any
three query points in $P$ can be computed in $O(\log n)$ time.
\end{lemma}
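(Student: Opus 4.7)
The plan is to split into the two cases that arise from the structure of the geodesic center $c$ of the three query points $s,t,r$, relative to the geodesic triangle $\triangle(s,t,r)$. As preprocessing, I would build the Guibas--Hershberger two-point shortest-path data structure in $O(n)$ time so that subsequently $d(p,q)$, the junction vertex of $\pi(s,t)$ and $\pi(s,r)$, and any tangent from an external point to a convex subchain of a geodesic path are all computable in $O(\log n)$ time; I would also arrange the persistent binary tree representation needed for the binary-search scheme of Lemma~\ref{lem:20}.

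For the easy case when $c \not\in \mathrm{int}\, \triangle(s,t,r)$, I would observe (following Aronov) that $c$ must be the midpoint of one of the three geodesic paths $\pi(s,t), \pi(s,r), \pi(t,r)$. For each of these three paths I would compute its midpoint $p_{st}$, $p_{sr}$, $p_{tr}$ by invoking Lemma~\ref{lem:20}, then test each one (for instance, check whether $d(s,p_{st}) \ge d(r,p_{st})$) using the GH data structure; the first candidate passing the test is $c$. If none passes, we certify that $c$ lies in the interior of $\triangle(s,t,r)$ and continue. All of this is $O(\log n)$.

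For the hard case, $c$ must be equidistant to $s,t,r$ and the anchors $p_s,p_t,p_r$ of $c$ on the sides of $\triangle(s,t,r)$ lie on three pseudo-convex chains $\gamma_s, \gamma_t, \gamma_r$, each a union of two convex subchains meeting at an apex of the triangle. I would parameterize these chains over $[0,1]$ in counterclockwise order and define monotone-decreasing continuous functions $f\colon A\to B$, $g\colon B\to C$, $h\colon C\to A$ so that $f(a)$ is the point $b$ whose tangent meets the tangent at $a$ on the bisector $B(s,t)$, and similarly for $g$ and $h$. The point $c$ corresponds to the unique fixed point of $h\circ g\circ f$, which I would compute by applying the tentative prune-and-search framework of Theorem~3.9 of Kirkpatrick--Snoeyink. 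To make their framework run in $O(\log n)$ rounds, I need two ingredients: (i) a constant-time comparison oracle determining $f(a)\gtrless b$ from the tangent intersection $p$ by comparing $d(s,a)+|\overline{pa}|$ against $d(t,b)+|\overline{pb}|$ (with the half-line orientation subtlety handled as in Fig.~\ref{fig:function}); and (ii) a binary-search access to each pseudo-convex chain. The latter is provided by Lemma~\ref{lem:20} applied to each convex subchain $\pi(s',p_{st})$, etc., and crucially the binary search can be arranged to maintain the geodesic distance $d(s,a)$ at every probed vertex $a$, supplying the weights needed by the oracle.

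The main obstacle is the interior case: showing that the pseudo-convex chains really meet the conditions of the Kirkpatrick--Snoeyink theorem (continuity and monotonicity of $f,g,h$, including the non-uniqueness of tangents at the apexes) and verifying that the binary-search representation of each chain, while not a single semiconvex chain, still supports $O(\log n)$-round tentative prune-and-search with $O(1)$-time comparisons. Once these are in place, the overall query time is $O(\log n)$ with $O(n)$ preprocessing, which is the content of Lemma~\ref{lem:centerquery}.
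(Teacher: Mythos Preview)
Your proposal is correct and follows essentially the same approach as the paper: the same GH preprocessing, the same two-case split on whether $c$ lies in the interior of $\triangle(s,t,r)$, the midpoint test via Lemma~\ref{lem:20} for the boundary case, and for the interior case the same three pseudo-convex chains $\gamma_s,\gamma_t,\gamma_r$ fed into Kirkpatrick--Snoeyink's tentative prune-and-search with the weighted-distance comparison oracle. The obstacles you flag (monotonicity of $f,g,h$, tangent non-uniqueness at apexes, and driving the prune-and-search with the binary-search scheme of Lemma~\ref{lem:20} while tracking $d(s,a)$) are exactly the points the paper addresses, so there is nothing to add.
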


\subsection{Proving Lemma~\ref{lem:10}}
\label{subsec:lemma}

With Lemma~\ref{lem:centerquery}, we are ready to present our algorithm for Lemma~\ref{lem:10}. Consider any three sites $s$, $t$, $r$ as specified in the statement of Lemma~\ref{lem:10}.
Our goal is to compute the point $\alpha(s,t,r)$, which is equidistant from
the three sites. Recall that we have two points $\beta(s,t)$ and $\beta(t,r)$ available for us, which are critical to the success of our approach.

The first step of our algorithm is to apply the algorithm for
Lemma~\ref{lem:centerquery} to compute the geodesic center $c$ of the three
sites. We check whether $c$ is equidistant to the three sites,
in $O(\log n)$ time. If yes,
$\alpha(s,t,r)=c$ and we are done. In the following we assume
otherwise.

Our algorithm may not compute $\alpha(s,t,r)$ even if it exists, but
will guarantee to do so if $\alpha(s,t,r)$ is a Voronoi vertex of $\fvd(S)$.
This is sufficient for constructing $\fvd(S)$ correctly. Hence, in
what follows we assume that $\alpha(s,t,r)$ is a Voronoi vertex.
This implies that there are two Voronoi edges connecting
$\alpha(s,t,r)$ with $\beta(s,t)$ and $\beta(t,r)$, respectively. Recall that $c^*$ is the geodesic center of $S$. The following observation was discovered by Aronov et al.~\cite{ref:AronovTh93}.

\begin{observation}\label{obser:monotone}{\em (Aronov et al.~\cite{ref:AronovTh93})}
If a point $p$ moves from $\beta(s,t)$ (resp., $\beta(t,r)$) to $\alpha(s,t,r)$ along the Voronoi edge, both $d(c^*,p)$ and $d(t,p)$
are monotonically decreasing.
\end{observation}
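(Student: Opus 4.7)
The plan is to establish each monotonicity claim separately. Let $e$ denote the Voronoi edge from $\beta(s,t)$ to $\alpha(s,t,r)$, a subarc of the bisector $B(s,t)$; at every interior point $p \in e$, the two farthest sites of $p$ are $s$ and $t$, so $d(s,p) = d(t,p) = f(p)$, where $f(q) := \max_{s' \in S} d(s', q)$. I would rely on the following facts already proved by Aronov et al.: $c^*$ is the unique minimizer of $f$ on $P$, $\fvd(S)$ is a tree rooted at $c^*$, and the cells around $\partial P$ appear in the same cyclic order as their sites on $\partial \gch(S)$ (the ordering lemma).

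For the monotonicity of $d(t,p)$, I would parameterize $e$ by arclength $\ell$ in the direction from $\beta(s,t)$ to $\alpha(s,t,r)$. At any non-breakpoint interior point $p$, $\frac{d}{d\ell} d(t,p) = v \cdot u_t$, where $v$ is the unit tangent to $e$ and $u_t$ is the unit vector from the anchor of $p$ in $\pi(t,p)$ to $p$ (and $u_s$ is defined similarly for $\pi(s,p)$). The bisector condition $d(s,\cdot) = d(t,\cdot)$ along $e$ forces $v \cdot u_s = v \cdot u_t$, so this derivative has a single well-defined value, constant in sign on each straight or hyperbolic sub-arc of $e$ between breakpoints. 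A sign change inside $e$ would produce a local extremum of $f$ in the interior of a Voronoi edge of $\fvd(S)$, contradicting $c^*$ being the unique minimizer of $f$ and the tree structure rooted at $c^*$. So the sign is uniform on $e$, and it must be negative because $\alpha(s,t,r)$ is strictly nearer to $c^*$ in the tree than $\beta(s,t)$, so $f$ strictly decreases along the tree path from $\beta(s,t)$ toward its global minimizer.

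For the monotonicity of $d(c^*,p)$, I would consider the geodesic disk $D(R) := \{q \in P : d(c^*,q) \le R\}$, which is simply connected for every $R \ge 0$ because $P$ is a simple polygon. If $d(c^*,\cdot)$ were non-monotone on $e$, then for some $R_0$ the geodesic circle $\partial D(R_0)$ would meet $e$ at two distinct interior points $p_1, p_2$. The subarc of $e$ between them together with an appropriate arc of $\partial D(R_0)$ bounds a region whose intersection with $\fvd(S)$ produces either a cycle in $\fvd(S)$ or a cell appearing twice in the cyclic order around $\partial P$, contradicting the tree structure or the ordering lemma. Hence $d(c^*,\cdot)$ is strictly monotone on $e$, and the direction follows from $d(c^*, \alpha(s,t,r)) < d(c^*, \beta(s,t))$, which is consistent with $\alpha(s,t,r)$ being the child of $\beta(s,t)$ in the sweep order.

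The main obstacle is making the two ``contradicts the tree / ordering structure'' steps rigorous. The cleanest route is to recognize that both monotonicities are manifestations of the geodesic sweep being well-posed: each edge of $\fvd(S)$ is traversed exactly once by the shrinking geodesic circle $\partial D(R)$ as $R$ decreases from $d(c^*, \beta(s,t))$ to $d(c^*, \alpha(s,t,r))$. Once this well-posedness is imported from Aronov et al., the two statements of the observation follow immediately, the first directly and the second via the chain $f(p) = d(t,p)$ on $e$.
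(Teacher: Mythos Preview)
The paper does not prove this observation; it simply attributes it to Aronov et al.\ and moves on. Your final paragraph lands in the same place: once the relevant facts are imported from~\cite{ref:AronovTh93}, both monotonicities follow, and that citation is the entire content of the paper's treatment.

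Your two intermediate self-contained attempts, however, have real gaps. For $d(t,p)$: a local extremum of $f$ in the interior of a Voronoi edge does not by itself contradict $c^*$ being the unique global minimizer of $f$, nor does the bare statement that $\fvd(S)$ is a tree rule it out; what Aronov actually uses is the convexity of $d(x,\cdot)$ (and hence of $f$) along geodesics, and the edge-monotonicity is derived from that, not from uniqueness of the minimizer. For $d(c^*,p)$: the arc of $\partial D(R_0)$ you adjoin to the subarc of $e$ is not part of $\fvd(S)$, so no cycle in $\fvd(S)$ is created, and the jump to ``a cell appearing twice around $\partial P$'' is unsupported---the ordering lemma concerns the order of cells along $\partial P$, not intersections of an interior geodesic circle with an interior Voronoi edge.

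Finally, your last sentence conflates two distinct facts. Sweep well-posedness (each edge hit exactly once by the shrinking geodesic circle) is precisely the monotonicity of $d(c^*,\cdot)$ on edges, so it gives the first claim directly; but it says nothing about $f$, so the identity $d(t,p)=f(p)$ on $e$ does not by itself yield the second claim. You still need, separately, that $f$ decreases along tree edges toward $c^*$, which is another result of Aronov et al.\ and not a consequence of the sweep being well-posed.
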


To simplify the notation, unless otherwise stated, we use $\alpha$ to refer to $\alpha(s,t,r)$. We define the three junction vertices $s'$, $t'$, and $r'$ in the same way as before, which are the three apexes of the geodesic convex hull $\triangle(s,t,r)$. Without loss of generality, we assume that $s'$, $t'$, and $r'$ are counterclockwise around the boundary of $\triangle(s,t,r)$ (e.g., see Fig.~\ref{fig:triangle}).
 We define $p_s$, $p_t$, and $p_r$ as the anchors of $\alpha$ in $\pi(s,\alpha)$, $\pi(t,\alpha)$, and $\pi(r,\alpha)$, respectively.
Define $p_{st}$, $p_{sr}$, and $p_{tr}$ as the middle points of $\pi(s,t)$, $\pi(s,r)$, and $\pi(t,r)$, respectively.

The following observation, obtained from the results of Aronov~\cite{ref:AronovOn89}, will occasionally be used later.

\begin{observation}\label{obser:old}{\em (Aronov~\cite{ref:AronovOn89})}
Suppose $x$ and $y$ are two points of $P$ such that their bisector $B(x,y)$ does not contain any vertex of $P$. Then, for any point $z\in P$, the shortest path $\pi(x,z)$ (resp., $\pi(y,z)$) either does not intersect $B(x,y)$ or intersects it at a single point.
\end{observation}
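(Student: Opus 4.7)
The plan is to argue by contradiction. Suppose $\pi(x,z)$ meets $B(x,y)$ in more than one point, and let $p_1$ be the first intersection point along $\pi(x,z)$ (starting from $x$) and let $p_2$ be any later intersection point. The case of $\pi(y,z)$ will be symmetric, so I focus on $\pi(x,z)$.

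The first step is to show that the entire sub-path of $\pi(x,z)$ from $p_1$ to $p_2$ lies on $B(x,y)$. By sub-path optimality of shortest paths, this sub-path equals $\pi(p_1,p_2)$ and has length $d(x,p_2)-d(x,p_1)$. Since $d(x,p_i)=d(y,p_i)$ for $i=1,2$, the concatenation of $\pi(y,p_1)$ with $\pi(p_1,p_2)$ has length
\[
d(y,p_1)+\bigl(d(x,p_2)-d(x,p_1)\bigr)=d(x,p_1)+d(x,p_2)-d(x,p_1)=d(y,p_2),
\]
so by uniqueness of geodesics in a simple polygon this concatenation equals $\pi(y,p_2)$. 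Consequently every point $p\in \pi(p_1,p_2)$ satisfies $d(x,p)=d(x,p_1)+d(p_1,p)=d(y,p_1)+d(p_1,p)=d(y,p)$ and thus lies on $B(x,y)$.

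Now I invoke the hypothesis that $B(x,y)$ contains no polygon vertex. The interior of $\pi(p_1,p_2)$ then contains no polygon vertex, so $\pi(p_1,p_2)$ is the straight segment $\overline{p_1p_2}$; moreover $p_1\in B(x,y)$ cannot itself be a polygon vertex. Because geodesics in $P$ turn only at polygon vertices, $\pi(x,z)$ passes through the non-vertex point $p_1$ along the line supporting $\overline{p_1p_2}$, and by the same reasoning $\pi(y,p_2)$ also passes through $p_1$ along that line. Picking a point $q$ on $\pi(x,z)$ slightly before $p_1$, within the maximal straight sub-segment of $\pi(x,z)$ containing $p_1$, the previous observation shows that $q$ also lies on $\pi(y,p_1)$, hence
\[
d(x,q)=d(x,p_1)-|qp_1|=d(y,p_1)-|qp_1|=d(y,q),
\]
i.e., $q\in B(x,y)$. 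This contradicts the choice of $p_1$ as the first intersection point of $\pi(x,z)$ with $B(x,y)$.

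The main technical delicacy is the local step in the third paragraph: concluding that $\pi(y,p_2)$ truly continues through $p_1$ along the same supporting line as $\pi(x,z)$, which relies on the ``no turn at a non-vertex'' property of geodesics applied to the path $\pi(y,p_2)=\pi(y,p_1)\cdot\overline{p_1p_2}$ just derived, and on choosing $q$ close enough to $p_1$ to remain simultaneously within the maximal straight sub-segments of $\pi(x,z)$ and $\pi(y,p_2)$ at $p_1$. Once this is handled carefully, the contradiction is immediate and the observation follows.
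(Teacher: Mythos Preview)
Your argument is correct, and it takes a genuinely different route from the paper. The paper's proof is structural: it invokes Aronov's result that $B(x,y)$ partitions $P$ into two subpolygons $P_x$ and $P_y$, with every point of $P_x$ strictly closer to $x$ and every point of $P_y$ strictly closer to $y$; geodesic convexity of these regions then forces $\pi(x,z)$ to stay inside $P_x$ until (possibly) a single crossing into $P_y$. Your proof instead works locally and from first principles: you show the subpath between two putative crossings must lie entirely on $B(x,y)$ (via optimal substructure and uniqueness of geodesics), and then exploit the ``no polygon vertex on $B(x,y)$'' hypothesis to force both $\pi(x,z)$ and $\pi(y,p_2)$ to be collinear through $p_1$, pushing the first crossing strictly earlier. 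The paper's approach is shorter and conceptually cleaner but leans on the cited partition result; yours is more hands-on and self-contained, extracting exactly the local consequence of the no-vertex hypothesis that is needed.
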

\begin{proof}
All arguments here are from Aronov~\cite{ref:AronovOn89}.
$B(x,y)$ divides $P$ into two subpolygons; one of them, denoted by $P_x$, contains $x$ and the other, denoted by $P_y$, contains $y$. We assume that neither $P_x$ nor $P_y$ contains $B(x,y)$.
All points in $P_x$ are closer to $x$ than to $y$ and all points in $P_y$ are closer to $y$ than to $x$.
Let $z$ be any point in $P$. If $z\in P_x$, then $\pi(x,z)$ is in $P_x$. If $z\in B(x,y)$, then $\pi(x,z)\setminus\{z\}$ is in $P_x$ due to the general position assumption. If $z\in P_y$, then $\pi(x,z)$ intersects $B(x,y)$ at a single point. Similar results hold for $\pi(y,z)$.
\end{proof}

Recall that $\alpha(s,t,r)$ is not $c$. Hence, $c$ must be equidistant to two sites and
the geodesic distance from them to $c$ is strictly larger than that from the third
site to $c$. Depending on what the two sites are, there are three cases
$d(c,s)=d(c,r)>d(c,t)$, $d(c,t)=d(c,r)>d(c,s)$, and
$d(c,t)=d(c,s)>d(c,r)$.
The following lemma shows that the latter two cases cannot happen.

\begin{lemma}\label{lem:twocases}
Neither $d(c,t)=d(c,r)>d(c,s)$ nor $d(c,t)=d(c,s)>d(c,r)$ can happen.
\end{lemma}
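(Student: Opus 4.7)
I plan to argue case 2 by contradiction; case 3 follows by an analogous argument (swapping $s \leftrightarrow r$ and $B(t,r), \beta(t,r) \leftrightarrow B(s,t), \beta(s,t)$). So suppose $d(c,t) = d(c,r) > d(c,s)$. Then, by the earlier case analysis when $c$ is not in the interior of $\triangle(s,t,r)$, $c$ must be the middle point $p_{tr}$ of $\pi(t,r)$, which lies on the side $\pi(t',r')$ of $\triangle(s,t,r)$; in particular, $c \in B(t,r)$.

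The core of the argument will rest on two facts. First, $c = p_{tr}$ is the \emph{unique} minimizer of $d(t,\cdot)$ on $B(t,r)$, with value $|\pi(t,r)|/2$: for every $p \in B(t,r)$, the triangle inequality gives $2d(t,p) = d(t,p) + d(r,p) \geq d(t,r)$, with equality iff $p$ lies on $\pi(t,r)$, forcing $p = p_{tr}$. Second, let $E_{tr}$ denote the Voronoi edge on $B(t,r)$ from $\beta(t,r)$ to $\alpha$; by Observation~\ref{obser:monotone}, $d(t,\cdot)$ strictly decreases along $E_{tr}$ from $d(t,\beta(t,r))$ to $d(t,\alpha)$. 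Since $\alpha \neq c$, this forces $d(t,\alpha) > d(t,c)$, whence $d(t,p) \geq d(t,\alpha) > d(t,c)$ for every $p \in E_{tr}$, so $c \notin E_{tr}$. Removing $E_{tr}$ from the simple curve $B(t,r)$ (whose endpoints lie on $\partial P$) leaves two sub-arcs: one past $\alpha$ (continuing away from $\beta(t,r)$) and one past $\beta(t,r)$ (continuing away from $\alpha$). The key step is to show that $c$ lies on the sub-arc past $\alpha$. Under the general-position assumption, $\alpha$ is not a breakpoint of $B(t,r)$, so $d(t,\cdot)$ is smoothly differentiable at $\alpha$ with strictly negative one-sided derivative from the $\beta(t,r)$-side; hence $d(t,\cdot)$ continues to decrease past $\alpha$ toward its unique global minimum $c = p_{tr}$.

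Once $c$ is placed on the sub-arc past $\alpha$, the contradiction is immediate. At $\alpha$ the three cells $C(s), C(t), C(r)$ of $\fvd(S)$ meet, with their three pairwise Voronoi edges at $\alpha$ lying along $B(s,t)$, $B(s,r)$, and $B(t,r)$. Crossing from $E_{tr}$ through $\alpha$ into the sub-arc past $\alpha$ on $B(t,r)$ simultaneously crosses the other two edges on $B(s,t)$ and $B(s,r)$ at $\alpha$, placing us in the interior of $C(s)$, where $s$ is strictly the farthest among $\{s,t,r\}$; so $d(p,s) > d(p,t) = d(p,r)$ for $p$ just past $\alpha$ on this sub-arc. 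By Observation~\ref{obser:old} applied to each of $B(s,t)$ and $B(s,r)$ (crossed at most once by any geodesic from $s$), this strict inequality persists along the entire sub-arc, so $d(c,s) > d(c,t)$, directly contradicting the case-2 hypothesis. The main technical obstacle will be the step locating $c$ past $\alpha$ rather than past $\beta(t,r)$: the local smoothness argument at $\alpha$ is clean generically, but a careful treatment may require a global quasi-convexity property of $d(t,\cdot)$ on $B(t,r)$, which I would establish from Observation~\ref{obser:old} together with the piecewise straight/hyperbolic structure of the bisector.
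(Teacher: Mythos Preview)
Your overall strategy can be made to work, but there are two genuine gaps in the justification, and the route is more circuitous than necessary.

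\textbf{Gap 1: misapplication of Observation~\ref{obser:old}.} You invoke Observation~\ref{obser:old} to argue that the strict inequality $d(s,\cdot)>d(t,\cdot)$ persists along the sub-arc of $B(t,r)$ past $\alpha$. But that observation concerns \emph{geodesic paths} $\pi(x,z)$ crossing a bisector, not arcs of one bisector crossing another. The correct argument is simply that $B(t,r)\cap B(s,t)=\{\alpha\}$: any common point is equidistant from all three sites, and by the general-position assumption such a point is unique. Hence the sub-arc past $\alpha$ never meets $B(s,t)$ again, so the sign of $d(s,\cdot)-d(t,\cdot)$ cannot change along it.

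\textbf{Gap 2: the local-derivative step.} Even granting that $\alpha$ is not a breakpoint of $B(t,r)$ (which is not part of the stated general-position assumption), a strictly negative one-sided derivative only says $d(t,\cdot)$ decreases \emph{locally} past $\alpha$; it does not by itself place the global minimum on that side. You correctly flag that the real tool is global unimodality of $d(t,\cdot)$ along $B(t,r)$. Once you have that, the derivative argument is superfluous: monotone decrease on $E_{tr}$ (Observation~\ref{obser:monotone}) plus strict unimodality already force $E_{tr}$ to lie entirely on the decreasing branch, so $c$ must be past $\alpha$.

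\textbf{Comparison with the paper.} The paper's argument is shorter and avoids both issues. It observes up front that since $d(c,s)<d(c,t)=d(c,r)$, the point $c$ lies on the farthest-Voronoi edge $E(t,r)$ of the three-site diagram $\fvd(s,t,r)$; and $\beta(t,r)$ lies on $E(t,r)$ for the same reason. Since $\alpha$ is an endpoint of $E(t,r)$, all three of $c,\beta(t,r),\alpha$ sit on $E(t,r)$ with $\alpha$ at one end; a two-case check on the relative order of $c$ and $\beta(t,r)$ then contradicts Observation~\ref{obser:monotone} via the same unimodality. In effect the paper uses the hypothesis $d(c,s)<d(c,t)$ at the outset to locate $c$ on the $\beta(t,r)$ side of $\alpha$, whereas you use monotonicity to locate $c$ on the far side and then re-derive $d(c,s)>d(c,t)$ --- the two arguments are dual, but the paper's ordering avoids the detour through the cell structure of $\fvd(S)$.
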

\begin{proof}
Note that the two cases are symmetric and thus we only discuss the case $d(c,t)=d(c,r)>d(c,s)$.

Assume to the contrary that $d(c,t)=d(c,r)>d(c,s)$ happens. Then, $c$ is the middle point of $\pi(t,r)$. Consider the farthest Voronoi diagram $\fvd(s,t,r)$ with respect to the three sites $s,t,r$ only (without considering other sites of $S$). Then, $\alpha$ is a vertex of the diagram, i.e., the three Voronoi edges bounding the three cells of $s$, $t$, and $r$ meet at $\alpha$ (e.g., see Fig.~\ref{fig:twocases}). Since $d(c,t)=d(c,r)>d(c,s)$, $c$ is on the Voronoi edge $E(t,r)$ bounding the cells of $t$ and $r$. By the definition of $\beta(t,r)$, it is also on $E(t,r)$. Hence, all three points $\alpha$, $c$, and $\beta(t,r)$ are on $E(t,r)$. Let $E'(t,r)$ be the portion of $E(t,r)$ between $\alpha$ and $\beta(t,r)$. Since both $\beta(t,r)$ and $\alpha$ are vertices of $\fvd(S)$, $E'(t,r)$ must be an edge of $\fvd(S)$ bounding the two cells of $t$ and $r$. By Observation~\ref{obser:monotone}, if we move a point $p$ on $E'(t,r)$ from $\beta(t,r)$ to $\alpha$, $d(t,p)$ is monotonically decreasing.

\begin{figure}[t]
\begin{minipage}[t]{\textwidth}
\begin{center}
\includegraphics[height=1.7in]{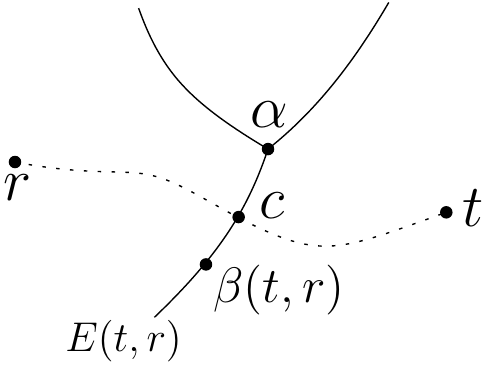}
\caption{\footnotesize Illustrating the proof of Lemma~\ref{lem:twocases}. The three solid curves are the three Voronoi edges of $\fvd(s,t,r)$. The dotted curve is the shortest path $\pi(r,t)$.}
\label{fig:twocases}
\end{center}
\end{minipage}
\vspace{-0.15in}
\end{figure}

Since $c$ is the middle point of $\pi(t,r)$, if we move a point $p$ on the bisector $B(t,r)$ from one end to the other, $d(t,p)$ will first striclty decreases until $p=c$ and then strictly increases. Note that $E'(t,r)\subseteq E(t,r) \subseteq B(t,r)$. Note also that $E'(t,r)$ has $\alpha$ and $\beta(t,r)$ as its two endpoints. Depending on whether $E'(t,r)$ contains $c$, there are two cases.

\begin{itemize}
\item
If $E'(t,r)$ does not contain $c$, then $c$, $\beta(t,r)$, and $\alpha$ appear in $E(t,r)$ in this order since $\alpha$ is a vertex of $\fvd(s,t,r)$ and thus is an endpoint of $E(t,r)$. Hence, $c$, $\beta(t,r)$, and $\alpha$ appear in $B(t,r)$ in this order. Therefore, if we move a point $p$ on $E'(t,r)$ from $\beta(t,r)$ to $\alpha$, $d(t,p)$ must be strictly increasing. But this contradicts with the fact that if we move a point $p$ on $E'(t,r)$ from $\beta(t,r)$ to $\alpha$, $d(t,p)$ is monotonically decreasing.

\item
If $E'(t,r)$ contains $c$, then $\beta(t,r)$, $c$, and $\alpha$ appear in $E(t,r)$ in this order (e.g., see Fig.~\ref{fig:twocases}). Hence, $\beta(t,r)$, $c$, and $\alpha$ appear in $B(t,r)$ in this order. Therefore, if we move a point $p$ on $E'(t,r)$ from $\beta(t,r)$ to $\alpha$, $d(t,p)$ will first strictly decrease and then strictly increase, a contradiction again.
\end{itemize}
The lemma thus follows.
\end{proof}

Note that Lemma~\ref{lem:twocases} is obtained based on the assumption that $\alpha(s,t,r)$ is a Voronoi vertex of $\fvd(S)$. Therefore, if one of the two cases in Lemma~\ref{lem:twocases} happens during the algorithm, then we can simply return null.


\medskip
In what follows, we assume that $d(c,s)=d(c,r)>d(c,t)$. Thus,
$c$ must be the middle point of $\pi(s,r)$. Depending on
where the location of $c$ is, there are three cases: $c\in
\pi(s',r')$, $c\in \pi(s,s')\setminus\{s'\}$, and $c\in
\pi(r',r)\setminus\{r'\}$. The latter
two cases are symmetric, so we will only discuss the first two
cases.

\subsubsection{The case $c\in \pi(s',r')$}
Let $\overline{uv}$ be the edge of $\pi(s',r')$ containing $c$ such that $d(s,u)<d(s,v)$. It possible that $u$ is $s$ or/and $v$ is $t$. We first assume that both $u$ and $v$ are polygon vertices; we will show later the other case can be reduced to this case.
Our algorithm relies on the following lemma (e.g., see Fig.~\ref{fig:subcase10lem}).

\begin{figure}[t]
\begin{minipage}[t]{\textwidth}
\begin{center}
\includegraphics[height=2.5in]{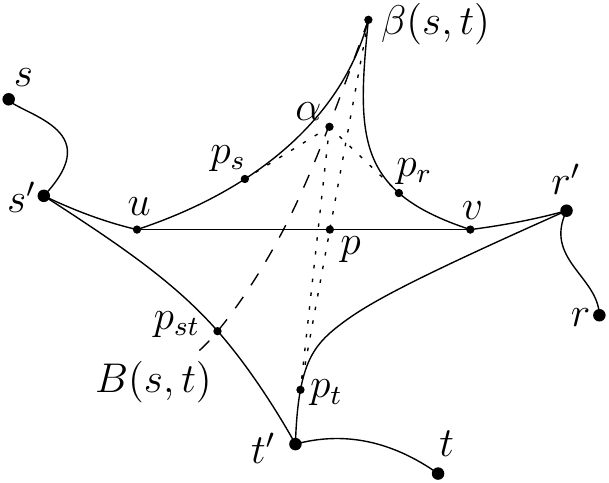}
\caption{\footnotesize Illustrating Lemma~\ref{lem:30}. In this example, $u'=u$ and $v'=v$.}
\label{fig:subcase10lem}
\end{center}
\end{minipage}
\vspace{-0.15in}
\end{figure}

\begin{lemma}\label{lem:30}
\begin{enumerate}
\item
$\alpha$ must be in the geodesic triangle $\triangle(s,r,\beta(s,t))$.

\item
The apexes of $\triangle(s,r,\beta(s,t))$ are $u'$, $v'$, and $\beta(s,t)$, where $u'$ (resp., $v'$) is the junction vertex of and $\pi(s,r)$ and $\pi(s,\beta(s,t))$ (resp., $\pi(r,\beta(s,t))$) (in Fig.~\ref{fig:subcase10lem}, $u'=u$ and $v'=v$).


\item
$p_s$ must be on the pseudo-convex chain $\pi(u',\beta(s,t))\cup \pi(u',v')$ and $\overline{\alpha p_s}$ is
tangent to the chain.

\item
$p_r$ must be on the pseudo-convex chain $\pi(v',\beta(s,t))\cup\pi(v',u')$ and $\overline{\alpha p_r}$ is tangent to the chain.

\item
$p_t$ must be on the pseudo-convex chain $\pi(t_{uv},u)\cup \pi(t_{uv},v)$ and $\overline{\alpha p_t}$ is
tangent to the chain, where $t_{uv}$ is the junction vertex of $\pi(t,u)$ and $\pi(t,v)$.

\item
$\overline{\alpha p_t}$ intersects $\overline{uv}$.
\end{enumerate}
\end{lemma}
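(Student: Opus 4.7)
The plan is to establish the six assertions of Lemma~\ref{lem:30} in the order stated. Claims 1 and 2 set up the containing geodesic triangle $\triangle(s,r,\beta(s,t))$; claims 3--5 then locate each anchor $p_s,p_t,p_r$ of $\alpha$ on its specific pseudo-convex chain inside this triangle together with the correct tangency; and claim 6 records the crossing property that is needed later.

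For claims 1 and 2, I would first exploit the assumption $d(c,s)=d(c,r)>d(c,t)$ together with the fact that $c$ is the midpoint of $\pi(s,r)$ lying on the edge $\overline{uv}\subseteq \pi(s',r')$. The bisector $B(s,r)$ crosses $\pi(s,r)$ transversally at $c$, and by Observation~\ref{obser:old} this is the only crossing. Since $\alpha\in B(s,r)\cap B(s,t)$, I would use Observation~\ref{obser:monotone} along the Voronoi edge from $\beta(s,t)$ to $\alpha$ --- on which $d(t,\cdot)$ strictly decreases --- to argue that $\alpha$ sits on the same side of $\pi(s,r)$ as $\beta(s,t)$ and cannot exit the triangle through any of its three sides. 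For the apex identification in claim 2, I would invoke the definitions of $u'$ and $v'$ as junction vertices directly; that $\beta(s,t)$ itself is the third apex reduces to showing that $\pi(\beta(s,t),s)$ and $\pi(\beta(s,t),r)$ diverge at $\beta(s,t)$, which is forced by $\beta(s,t)\in B(s,t)$ together with the general position assumption.

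For claims 3--5, once $\alpha\in\triangle(s,r,\beta(s,t))$ has been established, the path $\pi(s,\alpha)$ must agree with $\pi(s,r)$ until the junction vertex $u'$ and then peel off toward $\alpha$; hence the anchor $p_s$ lies on the portion of $\partial\triangle(s,r,\beta(s,t))$ reachable from $u'$ on the way to $\alpha$, which is exactly the pseudo-convex chain $\pi(u',\beta(s,t))\cup\pi(u',v')$, and the tangency of $\overline{\alpha p_s}$ at $p_s$ is the standard last-edge-of-geodesic tangency to a convex chain. Claim 4 is the mirror argument with $r,v'$ in place of $s,u'$. For claim 5, the additional step is to show that $\pi(t,\alpha)$ enters the triangle through the side $\pi(u',v')\subseteq \pi(s,r)$ rather than through a side incident to $\beta(s,t)$: if it entered through a $\beta(s,t)$-side, then $\pi(t,\alpha)$ would have to cross $B(s,t)$ more than once, contradicting Observation~\ref{obser:old}. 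Hence $p_t$ sits on the pseudo-convex chain $\pi(t_{uv},u)\cup \pi(t_{uv},v)$ with $\overline{\alpha p_t}$ tangent to it, and claim 6 then follows because $p_t$ is on the $t$-side of $\pi(s,r)$ while $\alpha$ is strictly on the opposite side, forcing $\overline{\alpha p_t}$ to cross $\overline{uv}$.

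The main obstacle will be claim 1: rigorously ruling out that $\alpha$ escapes $\triangle(s,r,\beta(s,t))$ through any of its three sides. This requires simultaneously tracking the three bisectors $B(s,r)$, $B(s,t)$, $B(t,r)$, their intersection at $\alpha$, and the monotonicity along the Voronoi edge provided by Observation~\ref{obser:monotone}; Observation~\ref{obser:old} is the key tool for bounding how often each bisector crosses a side of the triangle. Once claim 1 is in place, claims 2--6 follow from standard shortest-path and junction-vertex arguments, so the technical care will concentrate on claim 1.
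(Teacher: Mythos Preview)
Your plan has a genuine gap in claim~1, and it propagates to claims~5 and~6.

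For claim~1, the decisive fact is that $\alpha$ lies in $P_1$, the subpolygon cut off by $\overline{uv}$ on the side \emph{not} containing $t$. Observations~\ref{obser:monotone} and~\ref{obser:old} alone do not determine this. Observation~\ref{obser:monotone} tells you that $\alpha$ lies on $B(s,t)$ between $\beta(s,t)$ and $p_{st}$ (since $d(t,\cdot)$ decreases toward $p_{st}$), but it does not say which side of $\overline{uv}$ either $\alpha$ or $\beta(s,t)$ is on. The paper settles $\alpha\in P_1$ by a separate contradiction argument using Aronov's ordering lemma: if $\alpha\in P_2$, then in $\fvd(s,t,r)$ the three cells $C(s),C(t),C(r)$ would appear clockwise around $\partial P$ while $s,t,r$ are counterclockwise around their geodesic hull, which is impossible. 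Only after $\alpha\in P_1$ is established does the paper show $\beta(s,t)\in P_1$ (by a case analysis on whether $B(s,t)$ meets $\overline{uv}$, using Observation~\ref{obser:old}), and only then does the chain $\alpha\in\triangle(s,t,\beta(s,t))\Rightarrow\alpha\in\triangle(s,p,\beta(s,t))\Rightarrow\alpha\in\triangle(s,r,\beta(s,t))$ go through. Your sketch does not invoke the ordering lemma or any substitute for it, so the side determination is missing.

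For claims~5 and~6, your proposed argument also does not work as stated. Saying ``if $\pi(t,\alpha)$ entered through a $\beta(s,t)$-side then it would cross $B(s,t)$ more than once'' conflates the sides of $\triangle(s,r,\beta(s,t))$ (which are geodesic paths $\pi(s,\beta(s,t))$ and $\pi(r,\beta(s,t))$) with the bisector $B(s,t)$; Observation~\ref{obser:old} gives no control over how $\pi(t,\alpha)$ meets those geodesic sides. The paper's route is the reverse of yours: it first proves~(6) via a general-position argument (if $\overline{\alpha p_t}$ missed $\overline{uv}$, then $p_t$ would be a polygon vertex shared by $\pi(\alpha,t)$ and one of $\pi(\alpha,s),\pi(\alpha,r)$, hence lying on a bisector, contradicting general position), and then derives~(5) from~(6). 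Your derivation of~(6) from~(5) also presupposes $\alpha\in P_1$, which is exactly the unestablished fact.
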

\begin{proof}
As both $u$ and $v$ are polygon vertices, $\overline{uv}$ divides $P$ into two sub-polygons; one of them, denoted by $P_1$, does not contain $t$ and we use $P_2$ to denote the other one. Oh and Ahn~\cite{ref:OhVo20} claimed without proof (in the proof of Lemma 3.6~\cite{ref:OhVo20}) that $\alpha$ is in $P_1$. We provide a brief proof below.

Assume to the contrary that $\alpha$ is not in $P_1$. Then, $\alpha\in P_2$. Since $B(s,r)$ intersects $\pi(s,r)$ only once at $c\in \overline{uv}\subseteq \pi(s,r)$, $B(s,r)$ is partitioned into two portions by $c$, one in $P_1$ and the other in $P_2$; let $B_2(s,r)$ denote the portion in $P_2$, which has $c$ as one of its endpoint. As $\alpha$ is equidistant from $s$, $t$, and $r$, $\alpha$ is on $B(s,t)$. Since $\alpha\in P_2$, we obtain that $\alpha\in B_2(s,r)$. As $\alpha$ is not $c$, which is the geodesic center of $s$, $t$, and $r$, $\alpha$ cannot be in the geodesic triangle $\triangle(s,t,r)$. Therefore, if we move on $B_2(s,r)$ from $c$ to its other endpoint, we will first enter $\triangle(s,t,r)$ and then encounter either $\pi(s',t')$ or $\pi(t',r')$ before we encounter $\alpha$. Without loss of generality, we assume that we encounter $\pi(s',t')$. We assume that $s$, $t$, $r$ are ordered counterclockwise around the boundary of their geodesic hull (e.g., see Fig.~\ref{fig:subcase10lem}).
Consider the farthest Voronoi diagram $\fvd(s,t,r)$ of the three sites $s,t,r$ only (without considering other sites of $S$). Let $C(p)$ be the cell of $p\in \{s,t,r\}$ in the diagram.
As $d(c,s)=d(c,r)>d(c,t)$, $c$ belongs to the common boundary of $C(s)$ and $C(r)$, i.e., $c$ is on an edge of $\fvd(s,t,r)$. The point $\alpha$ divides $B(s,r)$ into two portions, one of which contains $c$. The above implies that the portion of $B(s,r)$ containing $c$ is an edge of $\fvd(s,t,r)$ (e.g., see Fig.~\ref{fig:inverseorder}). That edge partitions $\triangle(s,t,r)$ into two sides; one side contains $s'$ and the other contains $r'$. It is not difficult to see that the side containing $s'$ belongs to $C(r)$ while the other side belongs to $C(s)$.
Then, one can verify that the three cells $C(s)$, $C(t)$, and $C(r)$ in $\fvd(s,t,r)$ are ordered clockwise along the boundary of $P$ (e.g., see Fig.~\ref{fig:inverseorder}). According to Aronov~\cite{ref:AronovTh93}, $s$, $t$, and $r$ should also be ordered clockwise around the boundary of their geodesic hull. But this contradicts with the fact that $s$, $t$, and $r$ are ordered counterclockwise around the boundary of their geodesic hull.

\begin{figure}[t]
\begin{minipage}[t]{\textwidth}
\begin{center}
\includegraphics[height=2.5in]{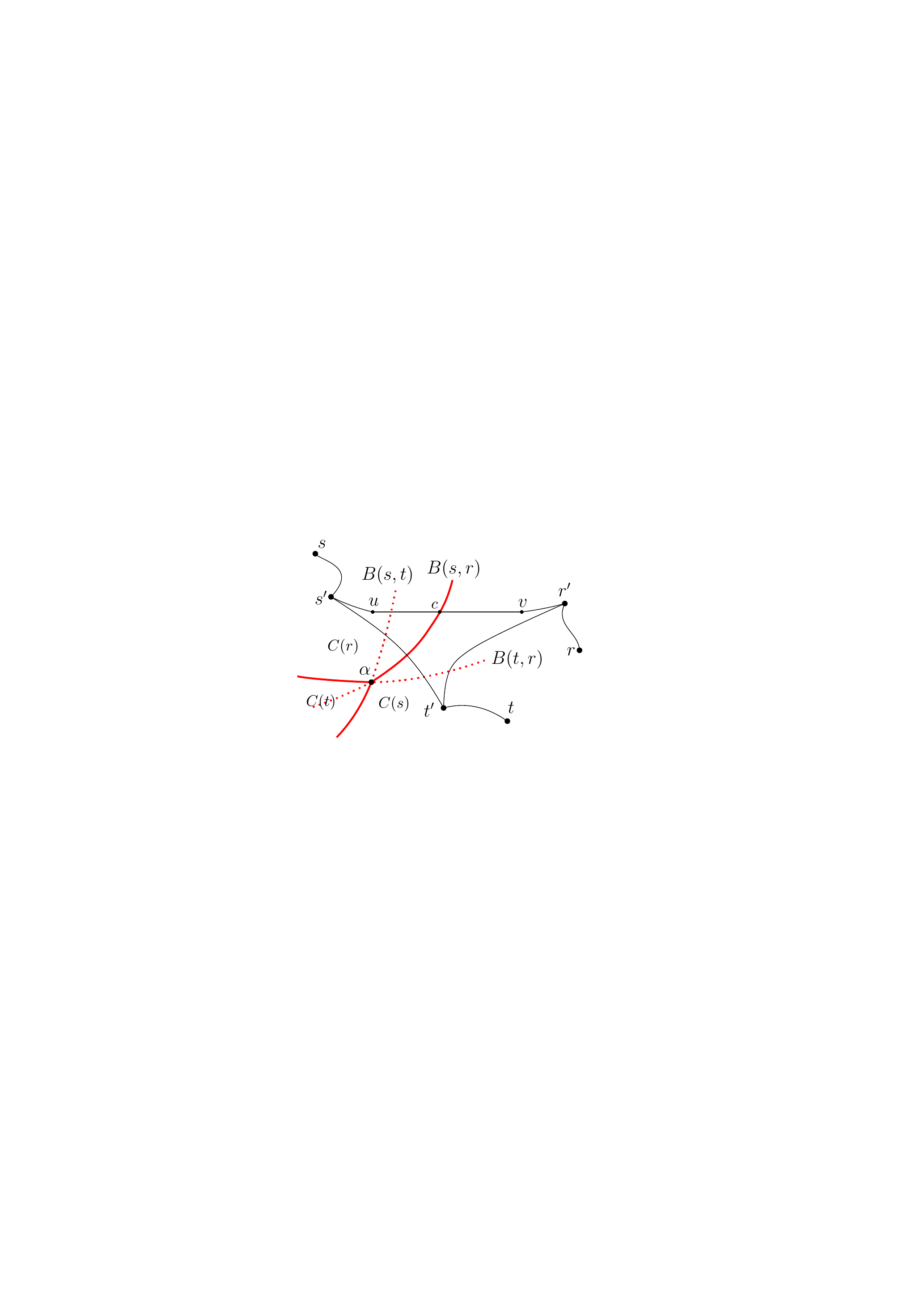}
\caption{\footnotesize Illustrating $\fvd(s,t,r)$, whose edges are depicted by thick (red) solid curves. The (red) dotted curves belong to bisectors but not on $\fvd(s,t,r)$. The point $\alpha$ is the only vertex of $\fvd(s,t,r)$ because it is equidistant from all three sites. The cells $C(s)$, $C(t)$, and $C(r)$ are ordered clockwise around $\alpha$, while $s$, $t$, and $r$ are ordered counterclockwise around the boundary of their geodesic hull, a contradiction.}
\label{fig:inverseorder}
\end{center}
\end{minipage}
\vspace{-0.15in}
\end{figure}

The above proves that $\alpha$ is in $P_1$. Oh and Ahn~\cite{ref:OhVo20} showed that
$\overline{\alpha p_t}$ intersects $\overline{uv}$. The main proof idea is that if this were not the case, then a vertex of $P$ would be on a bisector of two of the three sites $s$, $t$, and $r$, contradicting with the general position assumption (see the proof of Lemma 3.6~\cite{ref:OhVo20} for the detailed analysis). This leads to the lemma statement~(6), which further implies the lemma statement~(5).

To simplify the notation, let $q=\beta(s,t)$. By definition, $q$ is on the bisector $B(s,t)$. As $\alpha$ is equidistant from $s$, $t$, and $r$, $\alpha$ is also on $B(s,t)$. Let $p_{st}$ be the middle point of $\pi(s,t)$. Hence, $p_{st}\in B(s,t)$.

We claim that $\alpha$ must be on $B(s,t)$ between $p_{st}$ and $q$ (e.g., see Fig.~\ref{fig:subcase10}). Indeed, notice that $p_{st}$ is the point on $B(s,t)$ closest to $t$ and if we move a point $p$ from one end of $B(s,t)$ to the other end, $d(t,p)$ will first monotonically decrease until $p_{st}$ and then monotonically increase. By Observation~\ref{obser:monotone}, if we move a point $p$ along $B(s,t)$ from $q$ to $\alpha$, $d(t,p)$  will monotonically decrease. As such, $\alpha$ must be on $B(s,t)$ between $p_{st}$ and $q$.

We next argue that $q\in P_1$. Depending on whether $B(s,t)$ intersects $\overline{uv}$, there are two cases.
If $B(s,t)$ does not intersect $\overline{uv}$, then as $\alpha\in B(s,t)$ and $\alpha\in P_1$, $B(s,t)$ is in $P_1$. Since $q\in B(s,t)$, $q\in P_1$ holds.
If $B(s,t)$ intersects $\overline{uv}$, then since $\overline{uv}\subseteq \pi(s,r)$, by Observation~\ref{obser:old}, $B(s,t)$ intersects $\overline{uv}$ at a single point, denoted by $q_{st}$ (e.g., see Fig.~\ref{fig:subcase10}). To prove $q\in P_1$, since $\alpha\in P_1$ and $\alpha$ is on $B(s,t)$  between $q$ and $p_{st}$, it suffices to show that $p_{st}\in P_2$. Indeed, since $B(s,t)$ intersects $\overline{uv}$ at $q_{st}$ and $\overline{uv}\subseteq \pi(s,r)$, by Observation~\ref{obser:old}, $B(s,t)$ does not intersect any other point of $\pi(s,r)$. Hence, $B(s,t)$ does not intersect $\pi(s,s')\setminus\{s'\}$, which is a subpath of $\pi(s,r)$ and does not contain any point of $\overline{uv}$. This implies that $p_{st}$ cannot be on $\pi(s,s')\setminus\{s'\}$ and thus is on $\pi(s',t)$. Note that $s'$ is in $P_2$. Since both $s'$ and $t$ are in $P_2$, $\pi(s',t)$ is in $P_2$. As such, $p_{st}\in P_2$.

The above proves that $q\in P_1$ and $\alpha$ is on $B(s,t)$ between $q$ and $p_{st}$. In the following, we proceed to prove that $\alpha\in \triangle(s,r,q)$.

\begin{figure}[t]
\begin{minipage}[t]{\textwidth}
\begin{center}
\includegraphics[height=2.5in]{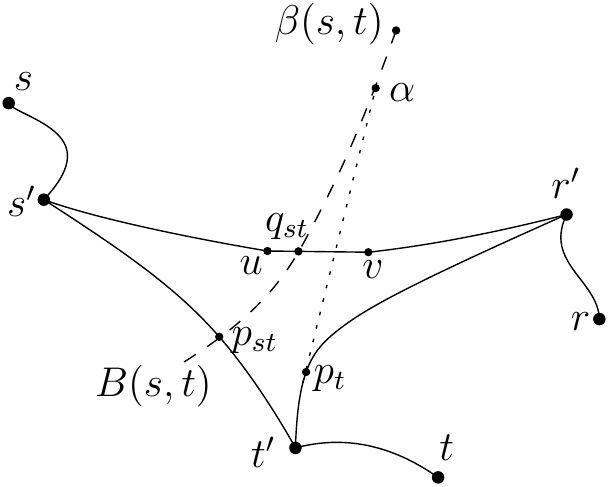}
\caption{\footnotesize Illustrating the relative positions of $\beta(s,t)$, $\alpha$, and $p_{st}$.}
\label{fig:subcase10}
\end{center}
\end{minipage}
\vspace{-0.15in}
\end{figure}

As $\alpha$ is on $B(s,t)$ between $q$ and $p_{st}$, $\alpha$ must be in the geodesic triangle $\triangle(s,t,q)$. Since $q\in B(s,t)$, due to the general position assumption, the incident edges of $q$ in $\pi(q,s)$ and $\pi(q,t)$ cannot be coincident~\cite{ref:AronovTh93}. This means that $q$ is the junction vertex of $\pi(q,s)$ and $\pi(q,t)$, and thus $q$ is an apex of $\triangle(s,t,q)$.
Since $t\in P_2$ and $q\in P_1$, $\pi(q,t)$ must cross $\overline{uv}$ at a point $p$; e.g., see Fig.~\ref{fig:subcase10lem}.
Since both $\alpha$ and $q$ are in $P_1$, $\alpha$ is also in the geodesic triangle $\triangle(s,p,q)$ and $q$ is an apex of $\triangle(s,p,q)$. Further, since $p\in \overline{uv}\subseteq \pi(s,r)$, $\triangle(s,p,q)$ is a subset of $\triangle(s,r,q)$ and $q$ is also an apex of $\triangle(s,r,q)$. As such, we obtain that $\alpha\in \triangle(s,r,q)$. This proves the lemma statement~(1).

The above also proves that $q$ is an apex of $\triangle(s,r,q)$. By definition, $u'$ and $v'$ are other two apexes of $\triangle(s,r,q)$. This proves the lemma statement~(2). Since $\alpha\in \triangle(s,r,q)$, the lemma statements (3) and (4) obviously hold.
\end{proof}

In light of Lemma~\ref{lem:30},
we can apply the tentative prune-and-search technique~\cite{ref:KirkpatrickTe95} on the three pseudo-convex chains specified in the lemma in a similar way as before to compute $\alpha$ in $O(\log n)$ time.

We summarize our algorithm for this case. First, we compute the
edge $\overline{uv}$, which can be done in $O(\log n)$ time using the
GH data structure by Lemma~\ref{lem:20}. Second, we compute the
junction vertex $t_{uv}$ of $\pi(t,u)$ and $\pi(t,v)$ in $O(\log n)$
time by the GH data structure~\cite{ref:GuibasOp89}. Third, we apply the tentative
prune-and-search technique on the three pseudo-convex chains as specified in
Lemma~\ref{lem:30}, along with the binary search scheme in
Lemma~\ref{lem:20} on the chains, to compute $\alpha$ in $O(\log n)$ time.

Recall that the above algorithm is based on the assumption that
$\alpha$ is a Voronoi vertex of $\fvd(S)$. However, when we invoke the procedure
during the geodesic sweeping algorithm
we do not know whether the assumption is true. Therefore, as a
final step, we add a validation procedure as follows. Suppose $\alpha$
is
the point returned by the algorithm. First, we check whether
$d(s,\alpha)=d(t,\alpha)=d(r,\alpha)$. If not, we return null.
Otherwise, we further check whether $d(c^*,\alpha)\leq
\min\{d(c^*,\beta(s,t)),d(c^*,\beta(t,r))\}$. This is because the Voronoi
vertex $\alpha$ is only useful if it is inside the current sweeping circle
$C$, whose geodesic distance to $c^*$ is at most
$\min\{d(c^*,\beta(s,t)),d(c^*,\beta(t,r))\}$ (because neither
$\beta(s,t)$ nor $\beta(t,r)$ is in the interior of $C$). Hence, if $d(c^*,\alpha)\leq
\min\{d(c^*,\beta(s,t)),d(c^*,\beta(t,r))\}$, then we return $\alpha$;
otherwise, we return null. This validation step
takes $O(\log n)$ time by the GH data structure.

\paragraph{At least one of $u$ and $v$ is not a polygon vertex.}
The above discusses the case where both $u$ and $v$ are polygon vertices. In the following, we consider the other case where at least one of them is not a polygon vertex, i.e., $u=s$ or/and $v=r$ (because all vertices of $\pi(s,r)$ except $s$ and $r$ are polygon vertices). In fact, this case is missed from the algorithm of Oh and Ahn~\cite{ref:OhVo20} (see the proof of Lemma~3.6~\cite{ref:OhVo20}). It turns out that Lemma~\ref{lem:30} still holds for this case and thus we can apply exactly the same algorithm as above. We prove the lemma below by reducing this case to the previous case where $u$ and $v$ are polygon vertices.

\begin{lemma}\label{lem:40}
%
%
%
%
%
Lemma~\ref{lem:30} still holds when $u=s$ or/and $v=r$.
\end{lemma}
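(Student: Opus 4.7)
The plan is to reduce this degenerate case to the non-degenerate case already handled by Lemma~\ref{lem:30}. The only place in that proof that required $u$ and $v$ to be polygon vertices was the step asserting that $\overline{uv}$ splits $P$ into two subpolygons $P_1$ (not containing $t$) and $P_2$. To recover this ingredient when $u=s$ or $v=r$, I would extend $\overline{uv}$ along the line containing it, past whichever endpoint coincides with a site, until the extension first meets $\partial P$. Concretely, if $u=s$ is in the interior of $P$, follow the line through $s$ and $v$ from $s$ in the direction opposite to $v$; this line locally stays in $P$ on both sides of $s$ because $s$ is an interior point, and we continue until the first exit point $u^*\in\partial P$. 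If $s$ is already on $\partial P$, simply set $u^*=s$; treat $v=r$ symmetrically to obtain $v^*$. Then $\overline{u^*v^*}$ is a chord of $P$ containing $\overline{uv}$ that divides $P$ into two subpolygons, one of which (call it $P_1$) again avoids $t$.

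With this replacement chord in hand, I would go through each of the six statements of Lemma~\ref{lem:30} and verify that they still hold. Statements (2)--(5) are phrased entirely in terms of geodesic paths, junction vertices, and pseudo-convex chains built from the sites and $\beta(s,t)$; none of these quantities depends on whether $u$ or $v$ happens to be a polygon vertex, so they carry over verbatim. Statement (1), that $\alpha\in\triangle(s,r,\beta(s,t))$, is re-established by the same contradiction as in Lemma~\ref{lem:30}: assuming $\alpha\in P_2$ forces the portion of $B(s,r)$ containing $c$ to be an edge of $\fvd(s,t,r)$, which in turn makes the cells $C(s),C(t),C(r)$ appear clockwise around $\alpha$, conflicting with the counterclockwise ordering of $s,t,r$ around their geodesic hull. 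Statement (6), that $\overline{\alpha p_t}$ meets $\overline{uv}$, is recovered by the same general-position argument used in Lemma~\ref{lem:30}, since otherwise a polygon vertex would land on a bisector of two of the three sites.

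The main obstacle is justifying that $\overline{u^*v^*}$ is in fact a chord of $P$, i.e., that the line truly crosses from one point of $\partial P$ to another without leaving $P$ along the way. For the endpoint we had to extend (say $u=s$ in the interior of $P$), this amounts to a short local visibility observation: an open ball around $s$ lies in $P$, so the line through $s$ and $v$ lies in $P$ on both sides of $s$ in that ball, and following it in the direction opposite to $v$ must hit $\partial P$ at some first point $u^*$ with $\overline{su^*}\subseteq P$; the piece $\overline{sv}$ is already in $P$ since it is a segment of $\pi(s,r)$. Corner cases where a site happens to lie on $\partial P$ are handled trivially by setting the corresponding starred point equal to the site itself. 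Once $\overline{u^*v^*}$ is shown to be a valid chord, every step of the original proof of Lemma~\ref{lem:30} transfers without modification, yielding Lemma~\ref{lem:40}.
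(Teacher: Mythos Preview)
Your high-level idea---extend $\overline{uv}$ beyond the site endpoints until it reaches $\partial P$---matches the paper's approach. However, you significantly underestimate the work that remains after this extension. The paper does \emph{not} simply use the extended chord to split $P$; instead it forms a new weakly simple polygon $P'$ by inserting the extension segments $\overline{uu'}$ and $\overline{vv'}$ as boundary edges, and then argues that the three bisectors $B(s,t)$, $B(s,r)$, $B(t,r)$ and all shortest paths from $s,t,r$ to points on those bisectors are identical in $P$ and in $P'$. Only after this is established can the proof of Lemma~\ref{lem:30} be replayed verbatim in $P'$, where $u$ and $v$ are now genuine polygon vertices.

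Your proposal skips this verification, and it is not dispensable. Take statement~(6): in the original proof the conclusion that $\overline{\alpha p_t}$ meets $\overline{uv}$ follows because $\pi(\alpha,t)$ is forced to cross $\overline{uv}$ (the only portal between $P_1$ and $P_2$). With your chord $\overline{u^*v^*}$ in the original $P$, the path $\pi(\alpha,t)$ could instead cross the extension $\overline{vv^*}$, and then the general-position argument yields only that $\overline{\alpha p_t}$ meets $\overline{u^*v^*}$, not $\overline{uv}$; statement~(5) would then fail as stated. The same issue arises in statement~(1): the step ``$\pi(q,t)$ crosses $\overline{uv}$ at a point $p\in\pi(s,r)$, hence $\triangle(s,p,q)\subseteq\triangle(s,r,q)$'' breaks if $p$ lands in the extension rather than in $\overline{uv}$. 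Ruling out these possibilities is exactly the content of the paper's four properties (the extensions miss all three bisectors and all relevant shortest paths), and proving them is nontrivial---the paper even develops a separate technical lemma (Lemma~\ref{lem:sp}) to handle one of the cases. So the reduction you sketch is the right opening move, but the claim that ``every step of the original proof of Lemma~\ref{lem:30} transfers without modification'' is where the real work hides.
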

\begin{proof}
Without loss of generality, we assume that $v$ is not a polygon vertex
and thus $v=r=r'$. We extend
$\overline{uv}$ in the direction from $u$ to $v$ until $\partial P$ at
a point $v'$ (e.g., see Fig.~\ref{fig:subcase100}).
If $u$ is also not a polygon vertex, then $u=s=s'$ and we extend
$\overline{uv}$ in the direction from $v$ to $u$ until $\partial P$ at
a point $u'$. If $u$ is a polygon vertex, we let $u'=u$.

Let $P'$ be the polygon obtained by adding $\overline{vv'}$ and
$\overline{uu'}$ to $P$. So $P'$ is a (weakly) simple polygon with $u$
and $v$ as two vertices. We claim that $B(s,t)$, $B(s,r)$, and
$B(t,r)$ are still the bisectors of $s$, $t$, and $r$ in $P'$. Before proving the claim, we proceed to prove the lemma with help of the claim. Due to the claim, since $u$ and $v$ are now both polygon vertices of $P'$, we can apply literally the same argument as in the previous case. Indeed, the argument only relies on the properties of the three bisectors, e.g., $\alpha$ is their common intersection. Now that the three bisectors do not change from $P$ to $P'$, the same argument still works. Thus, the lemma follows.

In the following, we prove the above claim. It is sufficient to show the following four properties: (1)
$\overline{vv'}\setminus\{v\}\cup \overline{uu'}\setminus\{u\}$ does not intersect any of the three bisectors $B(s,t)$, $B(t,r)$, and
$B(s,r)$; (2) $\overline{vv'}\setminus\{v\} \cup \overline{uu'}\setminus\{u\}$  does not intersect $\pi(s,p)$  for any point $p\in B(s,t)\cup B(s,r)$; (3) $\overline{vv'}\setminus\{v\} \cup \overline{uu'}\setminus\{u\}$ does not intersect $\pi(t,p)$  for any point $p\in B(s,t)\cup B(t,r)$;
(4) $\overline{vv'}\setminus\{v\} \cup \overline{uu'}\setminus\{u\}$ does not intersect $\pi(r,p)$ for any point $p\in B(s,r)\cup B(t,r)$. Below we will prove the above four properties only for $\overline{vv'}\setminus\{v\}$, as the proof for $\overline{uu'}\setminus\{u\}$ is similar.
We prove these properties in order.

\begin{figure}[t]
\begin{minipage}[t]{\textwidth}
\begin{center}
\includegraphics[height=2.5in]{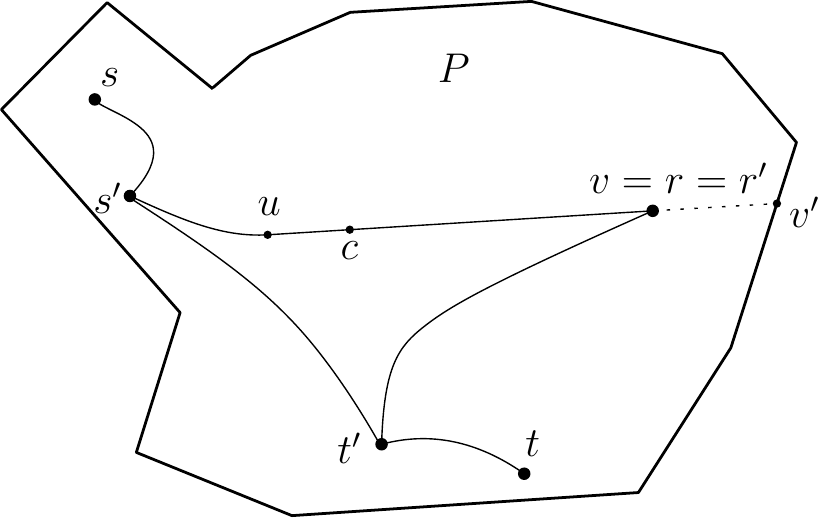}
\caption{\footnotesize Illustrating the proof of Lemma~\ref{lem:40}.}
\label{fig:subcase100}
\end{center}
\end{minipage}
\vspace{-0.15in}
\end{figure}

\paragraph{Property (1).}
First of all, since $v'$ is an extension of $\overline{uv}$, it holds that $\pi(s,v')=\pi(s,v)\cup \overline{vv'}$. Recall that $d(s,c)=d(c,r)>d(t,c)$ and $c\in \overline{uv}$.
Assume to the contrary that $\overline{vv'}$ intersects $B(s,t)$, say, at a point $z$. Then, $d(s,z)=d(s,c)+|\overline{cz}|$. On the other hand, by triangle inequality, $d(t,z)\leq d(t,c)+|\overline{cz}|$. Hence, we obtain $d(s,z)=d(s,c)+|\overline{cz}|>d(t,c)+|\overline{cz}|\geq d(t,z)$. However, since $z\in B(s,t)$, $d(s,z)=d(t,z)$, and thus contradiction occurs. This proves that  $\overline{vv'}$ does not intersect $B(s,t)$.

Because $\pi(s,v')$ contains $\overline{rv'}$, $d(s,p)>d(r,p)$ for any point $p\in \overline{vv'}$. Therefore, $\overline{vv'}\setminus\{v\}$ cannot intersect $B(s,r)$.

Next we prove the case for $B(t,r)$.
Let $a$ be the anchor of $r$ in $\pi(t,r)$. Since $d(t,c)<d(r,c)=|\overline{cr}|$, the angle $\angle(a,r,c)$ must be smaller than $\pi/2$, and thus the angle $\angle(v',r,a)$ is larger than $\pi/2$. Notice that $r$ is the junction vertex of $\pi(r,t)$ and $\pi(r,p)=\overline{rp}$ for any $p\in \overline{vv'}\setminus\{v\}$. Since the angle $\angle(p,r,a)=\angle(v',r,a)$ is larger than $\pi/2$, it must hold that $d(t,p)>d(r,p)$~\cite{ref:PollackCo89} (see Corollary 2). This implies that $p$ cannot be on $B(t,r)$. Thus, $\overline{vv'}\setminus\{v\}$ does not intersect $B(t,r)$.


This prove property (1).

\paragraph{Property (2).}
We now prove property (2). Let $p$ be any point in $B(s,t)\cup B(s,r)$. Assume to the contrary that $\pi(s,p)$ contains a point $p'\in \overline{vv'}\setminus{v}$. Then, since $\pi(s,p')$ contains $\pi(s,r)$, $\pi(s,p)$ contains $\pi(s,r)$ and thus contains $r$.

If $p\in B(s,r)$, then we immediately obtain contradiction as $\pi(s,p)$ cannot contain $r$.

Now consider the case $p\in B(s,t)$. Since $d(s,c)>d(t,c)$, there must be a point $p''\in \pi(s,c)$ such that $d(s,p'')=d(t,p'')$, i.e., $p''\in B(s,t)$. Since $\pi(s,p'')\subseteq \pi(s,c)$ and $c\neq r$, $\pi(s,p'')$ does not contain $r$.  If $p=p''$, we obtain that $\pi(s,p)$ does not contain $r$, which incurs contradiction. Hence, $p\neq p''$. Thus, $\pi(s,p)$ intersects $B(s,t)$ at two different points $p$ and $p''$. But this is not possible due to Observation~\ref{obser:old}.

This proves property (2).

\paragraph{Property (3).}
For property (3), let $p$ be any point in $B(s,t)\cup B(t,r)$. Assume to the contrary that $\pi(t,p)$ contains a point $p'\in \overline{vv'}\setminus\{v\}$.

We first discuss the case $p\in B(s,t)$. Consider the geodesic
triangle $\triangle(s,t,p)$; e.g., see Fig.~\ref{fig:propterty30}. Since $p\in B(s,t)$, $p$ must be an apex
of $\triangle(s,t,p)$. Let $a$ be the junction vertex of $\pi(s,p)$
and $\pi(s,t)$ and let $b$ the junction vertex of $\pi(t,s)$ and
$\pi(t,p)$. Hence, $a$ and $b$ are two apexes of $\triangle(s,t,p)$.
By a similar argument as Observation~\ref{obser:00}, the middle point
$p_{st}$ of $\pi(s,t)$ must be on $\pi(a,b)$, i.e., the side of
$\triangle(s,t,p)$ opposite to $p$. Hence, the portion of $B(s,t)$
between $p$ and $p_{st}$, denoted by $B$, separates $\triangle(s,t,p)$
into two parts. As $\pi(t,p)=\pi(t,b)\cup \pi(b,p)$, $p'$ is either in
$\pi(t,b)$ or in $\pi(b,p)$.

\begin{figure}[t]
\begin{minipage}[t]{\textwidth}
\begin{center}
\includegraphics[height=1.8in]{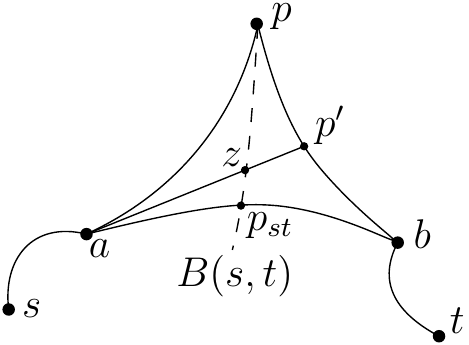}
\caption{\footnotesize Illustrating the pseudo-triangle $\triangle(s,t,p)$.}
\label{fig:propterty30}
\end{center}
\end{minipage}
\vspace{-0.15in}
\end{figure}

\begin{itemize}
\item
If $p'$ is in $\pi(t,b)$, then $p'$ is in $\pi(s,t)$ as $\pi(t,b)$ is a subpath of $\pi(s,t)$. Therefore, $\pi(s,p')$ is a subpath of $\pi(s,t)$. Recall that $\pi(s,r)\subseteq \pi(s,p')$. We thus obtain that $\pi(s,r)$ is a subpath of $\pi(s,t)$. Since $c\in \pi(s,r)$, we obtain that $s$, $c$, $r$, and $t$ are all on $\pi(s,t)$ in this order. Hence, $d(c,r)\leq d(c,t)$, which incurs contradiction as $d(c,r)>d(c,t)$.

\item
If $p'$ is in $\pi(b,p)$, then $\pi(s,p')$ must intersect $B$, say, at a point $z$ (e.g., see Fig.~\ref{fig:propterty30}). By Lemma~\ref{lem:sp}, it holds that $d(s,z)\geq d(z,p')$, and thus $d(s,z)\geq d(s,p')/2$. Since $r\in \pi(s,p')$, we have $d(s,z)\geq d(s,r)/2=d(s,c)$. This implies that $c\in \pi(s,z)$. Because $z\in B$, we have $d(s,c)\leq d(t,c)$. But this contradicts with the fact $d(s,c)>d(t,c)$.
\end{itemize}

The above obtains contradiction for the case $p\in B(s,t)$.

We next discuss the case $p\in B(t,r)$. The bisector $B(t,r)$ divides
$P$ into two subpolygons; let $P_t$ be the one containing $t$ and let $P_r$ denote the one containing $r$. We
assume that neither $P_s$ nor $P_r$ contains $B(t,r)$. As
$p\in B(t,r)$, the entire path $\pi(t,p)$ is in $P_t\cup B(s,t)$.
Since $p'\in \overline{vv'}\setminus\{v'\}$, by a similar argument using the angle at $r$ as that
for property (1), we can show that $d(t,p')>d(r,p')$. This implies
that $p'$ is in $P_s$. Therefore, $p'$ cannot be in $\pi(s,p)$, a
contradiction.

This proves property (3).

\paragraph{Property (4).}
For property (4), assume to the contrary that $\pi(r,p)$
contains a point $p'\in \overline{vv'}\setminus\{v\}$.

We first discuss the case $p\in B(t,r)$. Let $a$ be the anchor of $r$ in $\pi(t,r)$. Recall that we have shown before that the angle $\angle (a,r,v')$ is larger than $\pi/2$. Consider the geodesic triangle $\triangle(t,r,p)$.

\begin{itemize}
\item
If $r$ is not an apex of $\triangle(t,r,p)$, then $\overline{ra}\in \pi(r,t)\cap \pi(r,p)$. Since $p'\in \pi(r,p)$ and $p'\not\in \overline{ra}$, we obtain that $\pi(r,p')$ contains $a$. However, since $\angle (a,r,v') > \pi/2$ and $p'\in \overline{rv'}$, $\pi(r,p')=\overline{rp'}$ does not contain $a$, a contradiction.

\item
If $r$ is an apex of $\triangle(t,r,p)$, then since $p\in B(t,r)$, the angle $\angle (a,r,b)$ must be smaller than $\pi/2$, where $b$ is the anchor of $r$ in $\pi(p,r)$. As $p'\in \pi(r,p)$ and $\pi(r,p')=\overline{rp'}$, we obtain that $p'\in \overline{rb}$ and thus $\angle (a,r,b)=\angle (a,r,p')$.
Hence, $\angle (a,r,p')$ is smaller than $\pi/2$. However,
$\angle (a,r,p')=\angle (a,r,v')$, which is larger than $\pi/2$. Thus we obtain contradiction.
\end{itemize}

We then discuss the case $p\in B(s,r)$. Note that $u$ is the anchor of $r$ in $\pi(s,r)$. Hence, the angle $\angle(u,r,v')$ is equal to $\pi$, which is larger than $\pi/2$. Consequently, we can follow the same analysis as above to obtain contradiction.

This proves property (4).

The lemma thus follows.
\end{proof}

We finally prove the following technical lemma, which is needed in the proof of Lemma~\ref{lem:40}. The lemma, which establishes a very basic property of shortest paths in simple polygons, may be interesting in its own right.

\begin{lemma}\label{lem:sp}
Let $s$ and $t$ be any two points in $P$ such that $B(s,t)$ does not contain any vertex of $P$.
Suppose $p$ is a point in $B(s,t)$ and $p'$ is a point in $\pi(t,p)$.
Then, $\pi(s,p')$ intersects $B(s,t)$ at a single point $z$ and $d(s,z)\geq d(z,p')$ (in particular, $d(s,z)> d(z,p')$ if $p'\neq t$); e.g., see Fig.~\ref{fig:propterty30}.
\end{lemma}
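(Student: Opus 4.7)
The proof breaks into two parts: showing that $\pi(s,p')$ meets $B(s,t)$ in a single point $z$, and establishing the distance inequality $d(s,z)\ge d(z,p')$. For the first part, I would apply Observation~\ref{obser:old} to $B(s,t)$, which is legal since $B(s,t)$ contains no polygon vertex by hypothesis. Applied to $\pi(t,p)$, the observation shows that $\pi(t,p)\setminus\{p\}\subseteq P_t$, so $p'$ lies in $P_t$ or equals $p$. Since $s\in P_s$, the path $\pi(s,p')$ must cross $B(s,t)$, and a second application of the observation gives uniqueness of the crossing point $z$.

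For the inequality, I would first dispatch the two endpoint cases: when $p'=p$, we have $z=p=p'$, so $d(s,z)=d(s,p)>0=d(z,p')$; when $p'=t$, the path $\pi(s,p')$ coincides with $\pi(s,t)$, whose unique crossing with $B(s,t)$ is the midpoint $p_{st}$ of $\pi(s,t)$, giving $d(s,z)=d(z,t)=d(s,t)/2=d(z,p')$ with equality. For the generic case $p'\in\pi(t,p)\setminus\{t,p\}$, I would decompose via the geodesic triangle $\triangle(s,t,p)$ with apexes $a,b,p$ (the three pairwise junction vertices). If $p'\in\pi(t,b)$, then $p'\in\pi(s,t)$; since $p'\in P_t$ forces $p'$ to lie past the midpoint $p_{st}$ (otherwise $p'\in P_s$), we have $z=p_{st}$ and the inequality simplifies to $d(s,t)-d(s,p')>0$, which holds strictly because $p'\ne t$.

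The main obstacle is the remaining subcase $p'\in\pi(b,p)\setminus\{b\}$, in which $\pi(s,p')$ enters $\triangle(s,t,p)$ and crosses the portion of $B(s,t)$ joining $p_{st}$ to $p$ inside the triangle. To handle it, I would introduce the junction vertex $q$ of $\pi(p',s)$ and $\pi(p',t)$, so that both shortest paths share the common suffix $\pi(q,p')$; because $p'\in P_t$ strictly, one checks that $q\in P_t$ as well, which forces $z$ to lie on the prefix $\pi(s,q)\subseteq\pi(s,p')$. The target inequality then reduces to $d(s,q)\ge 2d(z,q)+d(q,p')$. I expect this to follow by combining the bisector identity $d(s,z)=d(t,z)$, the identity $d(s,p)=d(t,p)=d(t,p')+d(p',p)$ arising from $p\in B(s,t)$ and $p'\in\pi(t,p)$, and the triangle inequalities among the points $s,t,z,q,p',p$ (which already yield a one-sided bound $d(z,q)\ge(d(s,q)-d(t,q))/2$ via $d(s,z)\le d(t,q)+d(z,q)$).

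The delicate step is closing the gap between what standard triangle inequalities provide and the factor-of-two inequality required. I anticipate that this will require unfolding the paths $\pi(s,p)$ and $\pi(t,p)$ into the plane so that the claim reduces locally to the Euclidean fact about perpendicular bisectors, where a direct computation gives $d(s,z)=d(z,p')/(1-\theta)$ with $\theta\in[0,1]$ the parameter of $p'$ along $\overline{tp}$; alternatively, a careful routing argument could be used to show that $\pi(s,p')$ must traverse the triangle interior in a way constrained by the two sides $\pi(a,p)$ and $\pi(b,p)$, from which the desired upper bound on $d(z,q)$ would follow.
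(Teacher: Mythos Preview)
Your setup is sound: the single-crossing claim follows cleanly from Observation~\ref{obser:old}, and the endpoint cases $p'=t$ and $p'=p$, as well as the subcase $p'\in\pi(t,b)$, are handled correctly and in the same spirit as the paper.

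The genuine gap is in the main subcase $p'\in\pi(b,p)\setminus\{b\}$. Your reduction to $d(s,q)\ge 2d(z,q)+d(q,p')$ is correct, but the only concrete inequality you derive, $d(z,q)\ge (d(s,q)-d(t,q))/2$, is a \emph{lower} bound on $d(z,q)$, whereas you need an \emph{upper} bound. Pure triangle inequalities among $s,t,z,q,p',p$ will not reverse this; the information that $p\in B(s,t)$ has to be used geometrically, not just metrically. Your two proposed remedies are not executable as stated: ``unfolding'' a geodesic path in a simple polygon is not a well-defined operation in general (there is no canonical isometric flattening of $\triangle(s,t,p)$ that simultaneously straightens $\pi(s,p')$ and $\pi(t,p')$), and the Euclidean formula $d(s,z)=d(z,p')/(1-\theta)$, while correct for straight segments, does not transfer once the sides of the geodesic triangle are curved. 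The ``routing argument'' is left entirely unspecified.

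The paper's proof of this subcase is substantially different from your plan. It argues by contradiction from $d(s,z)\le d(z,p')$ and works with the \emph{anchor} $q$ of $p'$ in $\pi(s,p')$ (not the junction vertex you use), then performs an explicit straightening: it extends $\overline{p'q}$ past $q$ to a point $a'$ with $|\overline{qa'}|=d(s,q)$ and extends the edge of $\pi(p',p)$ at $p'$ to a point $b'$ with $|\overline{p'b'}|=d(t,p')$, producing a new (possibly non-embedded) geodesic triangle $\triangle(a',b',p)$. A circle-around-$z$ argument shows $\angle(a',b',p')\ge\pi/2$, which forces $l(b',p)<l(a',p)$ and hence $d(t,p)<d(s,p)$, contradicting $p\in B(s,t)$. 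This construction is the missing idea; without it (or an equivalent device), your argument does not close.
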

\begin{proof}
We first consider a special case where $p$ is the middle point $p_{st}$ of $\pi(s,t)$. Note that $p_{st}\in B(s,t)$. In this case, $z=p_{st}$ and $\pi(s,t)=\pi(s,z)\cup \pi(z,t)$. Hence,
$\pi(s,p')=\pi(s,z)\cup \pi(z,p')$ and $d(s,z)=d(z,t)\geq d(z,p')$ (and $d(s,z)=d(s,t)>d(z,p')$ if $p'\neq t$).

In the following we assume $p\neq p_{st}$.
Consider the geodesic triangle $\triangle(s,t,p)$. Since $p\in B(s,t)$, $p$ is an apex of $\triangle(s,t,p)$. Let $a$ be the junction vertex of $\pi(s,t)$ and $\pi(s,p)$ and $b$ be the junction vertex of $\pi(t,s)$ and $\pi(t,p)$ (e.g., see Fig.~\ref{fig:propterty30}). Hence, $a,b,p$ are the three apexes of $\triangle(s,t,p)$. In the following discussion we will use $\triangle(a,b,p)$ instead. By a similar argument as Observation~\ref{obser:00}, $p_{st}$ is in $\pi(a,b)$.

The bisector $B(s,t)$ partitions $P$ into two connected subpolygons $P_s$ and $P_t$ such that $P_s$ contains $s$ and $P_t$ contains $t$~\cite{ref:AronovOn89}. We assume that neither $P_s$ nor $P_t$ contains $B(s,t)$.
Since $p\in B(s,t)$, $\pi(s,p)\setminus\{p\}\subseteq P_s$ and $\pi(t,p)\setminus\{p\}\subseteq P_t$~\cite{ref:AronovOn89}. This implies that $\pi(s,p')$ intersects $B(s,t)$ at a single point $z$ because $s\in P_s$ and $p'\not\in P_s$.

Since $p'\in \pi(t,p)$, $p'$ is either in $\pi(t,b)$ or in $\pi(b,p)\setminus\{b\}$. In the former case, $\pi(s,b)\subseteq \pi(s,p')\subseteq \pi(s,t)$ and thus $z=p_{st}$. Hence, $d(s,z)=d(s,t)/2$. On the other hand, $\pi(z,p')\subseteq\pi(z,t)$ and thus $d(z,p')\leq d(z,t)=d(s,t)/2$ (and $d(z,p')<d(z,t)$ if $p'\neq t$). Therefore, we obtain that $d(s,z)\geq d(z,p')$ (and $d(s,z) > d(z,p')$ if $p'\neq t$).

It remains to discuss the case $p'\in \pi(b,p)\setminus\{b\}$. In this case, $p'$ is a point on the boundary of $\triangle(a,b,p)$. Let $q$ be the anchor of $p'$ in $\pi(s,p')$.  Because $p'$ is on the boundary of $\triangle(a,b,p)$, $\pi(s,a)\subseteq \pi(s,p')$ and $q$ is on the boundary of $\triangle(a,b,p)$. Since $p'$ is in $\pi(b,p)$, which is the opposite side of the apex $a$, $q$ cannot be on $\pi(b,p)$ and thus must be on $\pi(a,p)\cup \pi(a,b)$. Note that $\pi(a,b)=\pi(a,p_{st})\cup \pi(p_{st},b)$.

Our goal is to prove $d(s,z)> d(z,p')$.
Assume to the contrary that $d(s,z)\leq d(z,p')$. In the following we will obtain $d(s,p)>d(t,p)$, which incurs contradiction as $p\in B(s,t)$.
Depending on whether $q$ is in $\pi(a,p)\cup \pi(a,p_{st})$ and $\pi(p_{st},b)\setminus\{p_{st}\}$, there are two cases. We first show that the latter case can be reduced to the former case.

\begin{figure}[t]
\begin{minipage}[t]{\textwidth}
\begin{center}
\includegraphics[height=1.8in]{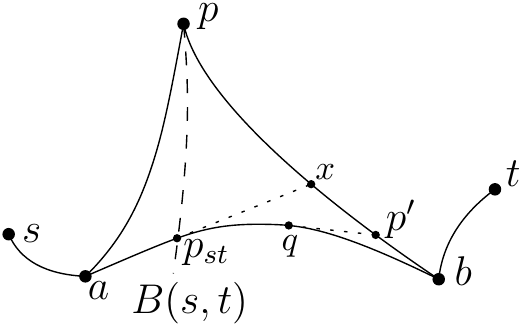}
\caption{\footnotesize Illustrating the proof of Lemma~\ref{lem:sp} for the case $q\in \pi(p_{st},b)\setminus\{p_{st}\}$.}
\label{fig:sp}
\end{center}
\end{minipage}
\vspace{-0.15in}
\end{figure}

If $q\in \pi(p_{st},b)\setminus\{p_{st}\}$, then $z=p_{st}$; e.g., see Fig.~\ref{fig:sp}. Hence, $d(z,b)\leq d(t,z)=d(s,z)\leq d(z,p')$.
If we move a point $x$ on $\pi(b,p)$ from $b$ to $p$, then $d(s,x)$ is strictly convex~\cite{ref:PollackCo89} (see the proof of Lemma~1), and more precisely, $d(s,x)$ first strictly decreases until a point $x^*$ and then strictly increases. We claim that $x^*\in \pi(b,p')$. Indeed, assume to the contrary that $x^*\in \pi(p',p)$. Then, we can obtain $d(s,p')<d(s,b)$. Because both $\pi(s,p')$ and $\pi(s,b)$ contain $z$, we derive that $d(z,p')<d(z,b)$, which contradicts with $d(z,p')\geq d(z,b)$. Due to the above claim, as we move $x$ from $p'$ to $p$ along $\pi(p',p)$, $\pi(s,x)$ will strictly increase. We stop moving $x$ when $\overline{xx'}$ contains $p_{st}$ (e.g., see Fig.~\ref{fig:sp}), where $x'$ is the anchor of $x$ in $\pi(s,x)$. Note that such a moment must exist as $\pi(s,p)$ does not contain $p_{st}$. As $p_{st}\in \pi(s,p')$, $z$ is still $p_{st}$. Since $d(s,x)>d(s,p')$ and both $\pi(s,x)$ and $\pi(s,p')$ contain $z$, we can obtain that $d(z,p')<d(z,x)$. As $d(s,z)\leq d(z,p')$, we deduce that $d(s,z)\leq d(z,x)$. Now we obtain an instance of the first case (i.e., $q\in \pi(a,p)\cup \pi(a,p_{st})$) because $x'\in \pi(a,p_{st})$, i.e., we can consider $x$ as a new point $p'$ and obtain contradiction by using the analysis given below for the first case.

\begin{figure}[t]
\begin{minipage}[t]{\textwidth}
\begin{center}
\includegraphics[height=2.0in]{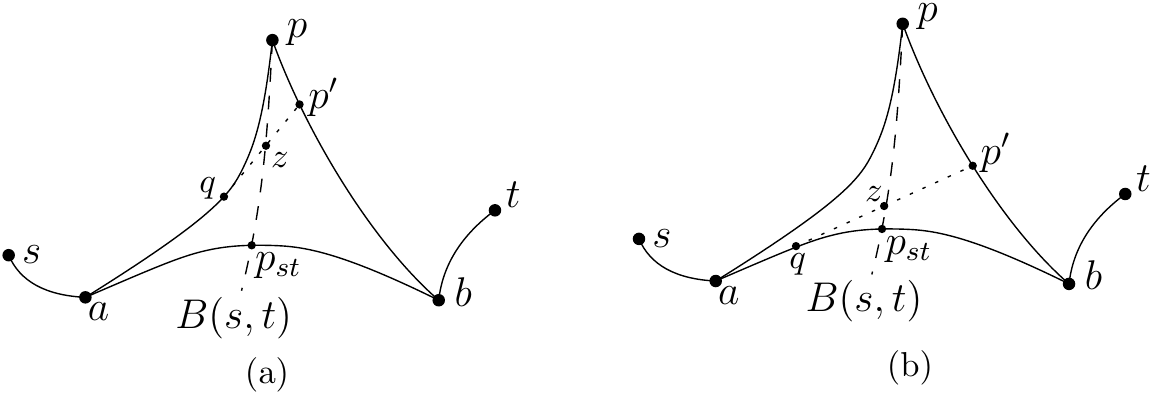}
\caption{\footnotesize Illustrating the proof of Lemma~\ref{lem:sp}: (a) $q\in \pi(a,p)$; (b) $q\in  \pi(a,p_{st})$.}
\label{fig:sp00}
\end{center}
\end{minipage}
\vspace{-0.15in}
\end{figure}

In the following we consider the case where $q\in \pi(a,p)\cup \pi(a,p_{st})$; e.g., see Fig.~\ref{fig:sp00}. In this case, $\overline{qp'}$ intersects $B(s,t)$ at $z$ and thus $d(z,p')=\overline{zp'}$. There are two subcases depending on whether $q\in \pi(a,p)$ or $q\in \pi(a,p_{st})$. For each case, we will construct a geodesic triangle $\triangle(a',b',p)$ (which may not be in $P$) with the following properties: (1) $a'$, $b'$, and $p$ are its three apexes; (2) the length of the side of $\triangle(a',b',p)$ connecting $a'$ and $p$, denoted by $l(a',p)$, is at most $d(s,p)$; (3) the length of the side of $\triangle(a',b',p)$ connecting $b'$ and $p$, denoted by $l(b',p)$, is equal to $d(b,p)$; (4) the angle at $b'$ formed by its two incident edges of $\triangle(a',b',p)$ is at least $\pi/2$. These properties together lead to $d(s,p)>d(t,p)$. Indeed, due to property (4), it holds that $l(b',p)<l(a',p)$~\cite{ref:PollackCo89} (see Corollary 2). Combining with properties (2) and (3), we have $d(t,p)=l(b',p)<l(a',p)\leq d(s,p)$. This incurs contradiction since $d(t,p)=d(s,p)$.

\begin{figure}[t]
\begin{minipage}[t]{\textwidth}
\begin{center}
\includegraphics[height=2.0in]{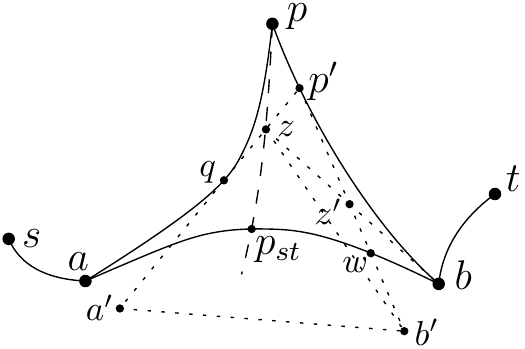}
\caption{\footnotesize Illustrating the definitions of $a'$, $b'$, $w$, and $z'$ in the proof of Lemma~\ref{lem:sp}.}
\label{fig:sp10}
\end{center}
\end{minipage}
\vspace{-0.15in}
\end{figure}

\paragraph{The first subcase $q\in \pi(a,p)$.}
We begin with the subcase $q\in \pi(a,p)$; e.g., see Fig.~\ref{fig:sp00}(a). We extend $\overline{p'q}$ along the direction from $p'$ to $q$ until a point $a'$ such that $|\overline{qa'}|=d(s,q)$; note that $\overline{qa'}$ may not be in $P$. Refer to Fig.~\ref{fig:sp10}. Note that $\pi(p,q)\cup \overline{qa'}$ is still a convex chain and its length $l(a',p)$ is equal to $|\overline{qa'}|+d(q,p)=d(s,q)+d(q,p)=d(s,p)$. Let $p''$ be the vertex of $\pi(p',p)$ incident to $p'$. We extend $\overline{p''p'}$ along the direction from $p''$ to $p'$ until a point $b'$ such that $|\overline{p'b'}|=d(t,p')$. Note that $\pi(p,p')\cup \overline{p'b'}$ is still a convex chain and its length $l(b',p)$ is equal to $|\overline{p'b'}|+d(p',p)=d(t,p')+d(p',p)=d(t,p)$. In the following we show that the angle $\angle(p',b',a')$ is at least $\pi/2$; this will prove all four properties described above for the geodesic triangle $\triangle(a',b',p)$.

We claim that $|\overline{zb'}|\leq d(z,t)$. Before proving the claim, we first show $\angle(p',b',a')\geq \pi/2$ by using the claim. Indeed, notice that $|\overline{za'}|=|\overline{zq}|+|\overline{qa'}|=|\overline{zq}|+d(s,q)=d(s,z)$. As $z\in B(s,t)$, $d(s,z)=d(t,z)$. Hence, $|\overline{za'}|\geq |\overline{zb'}|$. Recall that $d(s,z)\leq d(z,p')=|\overline{zp'}|$. Therefore, we have $|\overline{zp'}|\geq |\overline{za'}|\geq |\overline{zb'}|$.  If we draw a circle centered at $z$ with radius equal to $|\overline{za'}|$, then $a'$ is on the circle, $b'$ is inside or on the circle, and $p'$ is outside or on the circle. Since $\overline{a'p'}$ contains a diameter of the circle, we obtain that $\angle(p',b',a')\geq \pi/2$.

We proceed to prove the claim $|\overline{zb'}|\leq d(z,t)$.
Notice that $\overline{p'b'}$ must intersect $\pi(z,b)$. Indeed, unless $\overline{p'b'}$ contains $b$ (in which case it is vacuously true that $\overline{p'b'}$ intersects $\pi(z,b)$), as $\overline{p'b'}$ is tangent to $\pi(b,p)$ at $p'$, if we move on $\overline{p'b'}$ from $p'$ towards $b'$, we will enter the interior of $\triangle(a,b,p)$ and let $w$ be the first point on the boundary of $\triangle(a,b,p)$ we meet during the above movement after $p'$ (i.e., we will go outside $\triangle(a,b,p)$ after $w$; e.g., see Fig.~\ref{fig:sp10}). Then, $\overline{p'w}$ separates $z$ and $b$ on its two sides, and thus $\overline{p'w}$ must intersect $\pi(z,b)$, say, at a point $z'$.
We next prove $|\overline{z'b'}|\leq d(z',t)$. Indeed, $d(t,p')=|\overline{p'b'}|=|\overline{p'z'}|+|\overline{z'b'}|$. On the other hand, by triangle inequality, $d(t,p')\leq |\overline{p'z'}|+d(z',t)$. Hence, we obtain $|\overline{z'b'}|\leq d(z',t)$. Consequently, by triangle inequality, $|\overline{zb'}|\leq d(z,z')+|\overline{z'b'}|\leq d(z,z')+d(z',t)=d(z,t)$. This proves the claim.

The above proves the first subcase $q\in \pi(a,p)$.

\paragraph{The second subcase $q\in \pi(a,p_{st})$.}
We next discuss the subcase $q\in \pi(a,p_{st})$; e.g., see Fig.~\ref{fig:sp00}(b). The analysis is somewhat similar. First of all, we define $b'$ in the same way as above; by the same argument, we have: (1) $\pi(p,p')\cup \overline{p'b'}$ is a convex chain and its length is equal to $d(p,b)$; (2) $|\overline{zb'}|\leq d(z,t)$.

\begin{figure}[t]
\begin{minipage}[t]{\textwidth}
\begin{center}
\includegraphics[height=2.0in]{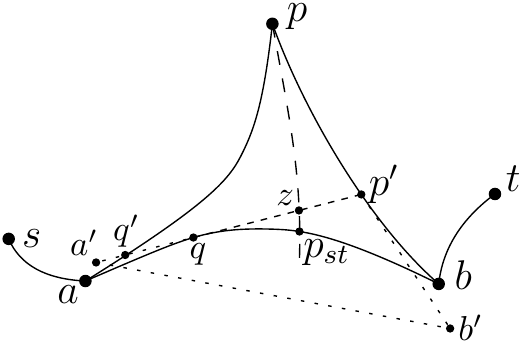}
\caption{\footnotesize Illustrating the definitions of $a'$, $b'$, and $q'$ in the proof of Lemma~\ref{lem:sp}.}
\label{fig:sp20}
\end{center}
\end{minipage}
\vspace{-0.15in}
\end{figure}

The point $a'$ is now defined in a slightly different way. Refer to Fig.~\ref{fig:sp20}. We extend $\overline{p'q}$ from $p'$ to $q$ until a point $a'$ that satisfies the following two conditions: (1) $|\overline{qa'}|\geq d(s,q)$; (2) $\overline{p'a'}$ intersects $\pi(a,p)$, say, at a point $q'$ (this is possible as $\overline{p'q}$ is tangent to $\pi(a,p_{st})$ at $q$). Note that $a'=q'$ if and only if $|\overline{qq'}|\geq d(s,q)$. Also note that $|\overline{qa'}|=d(s,q)$ if $a'\neq q'$. Next we prove the four properties of the geodesic triangle $\triangle(a',b',p)$.

First of all, notice that $|\overline{a'z}|=|\overline{a'q}|+|\overline{qz}|\geq d(s,q)+|\overline{qz}|=d(s,z)$. Recall that $|\overline{zb'}|\leq d(t,z)=d(s,z)$ and $d(s,z)\leq |\overline{zp'}|$. By a similar argument as before for the first subcase, the angle $\angle(a',b',p')$ is at least $\pi/2$. By the definitions of $a'$ and $q'$, $\pi(p,q')\cup \overline{q'a'}$ is a convex chain and we will show below that its length $l(a',p)$ is at most $d(s,p)$, which will prove all four properties of $\triangle(a',b',p)$.

To prove $l(a',p)\leq d(s,p)$, as $l(a',p)=|\overline{a'q'}| + d(q',p)$ and $d(s,p)=d(s,q')+d(q',p)$, it is sufficient to prove $|\overline{a'q'}|\leq d(s,q')$. If $a'=q'$, this is obviously true. We thus assume $a'\neq q'$. Then, $|\overline{qq'}|< |\overline{qa'}| = d(s,q)$. Hence, $d(s,p')=d(s,q)+|\overline{qp'}|=|\overline{qa'}| + |\overline{qp'}|=|\overline{a'p'}|=|\overline{a'q'}|+|\overline{q'p'}|$. On the other hand, by triangle inequality, $d(s,p')\leq d(s,q')+|\overline{q'p'}|$. Therefore, we obtain $|\overline{a'q'}|\leq d(s,q')$.

The above proves the second subcase $q\in \pi(a,p_{st})$.

The lemma thus follows.
\end{proof}


\vspace{-0.2in}
\subsubsection{The case $c\in \pi(s,s')\setminus\{s'\}$}

We now consider the case $c\in \pi(s,s')\setminus\{s'\}$. For
this case, Oh and Ahn~\cite{ref:OhVo20} (see the proof of
Lemma~3.6~\cite{ref:OhVo20}) claimed that $\alpha$ does not exist. However, this
is not correct; see Fig.~\ref{fig:counterexample} for a
counterexample.

\begin{figure}[h]
\begin{minipage}[t]{\textwidth}
\begin{center}
\includegraphics[height=1.8in]{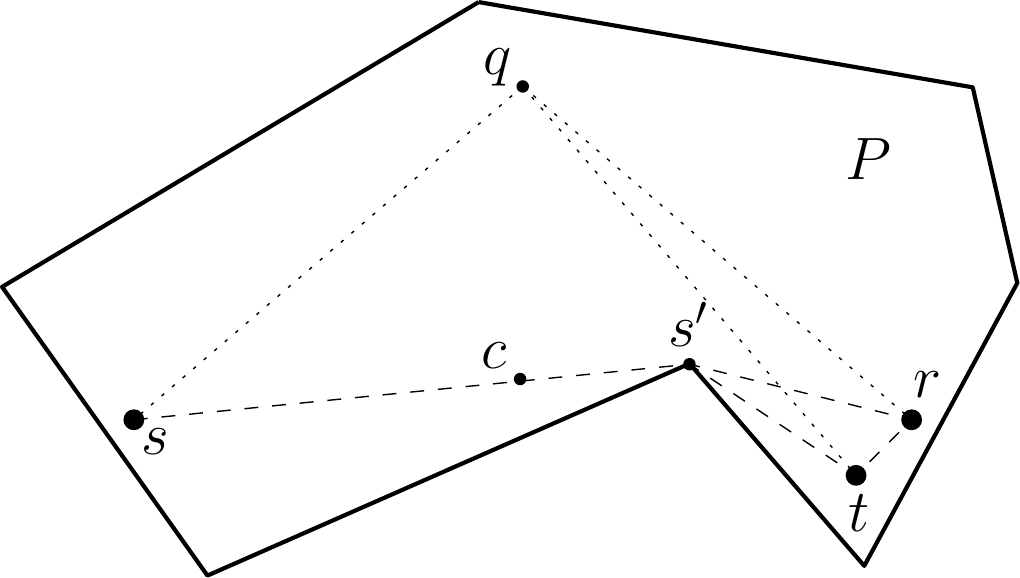}
\caption{\footnotesize Illustrating an example where $\alpha$ exists when $c\in \pi(s,s')\setminus\{s'\}$. The apexes of
the geodesic triangle $\triangle(s,r,t)$ are $s'$, $r'=r$, and $t'=t$.
$|\overline{s't}|<|\overline{s'r}|$. $c$ is the middle point of
$\pi(s,r)=\overline{ss'}\cup \overline{s'r}$. However, one can verify (e.g.,
by a ruler) that $q$ is equidistant to $s$, $r$, and $t$, and
thus $\alpha=q$ exists.}
\label{fig:counterexample}
\end{center}
\end{minipage}
\vspace{-0.1in}
\end{figure}

Let $v$ be the vertex incident to $s'$ in $\pi(s',r')$. To make the notation consistent with the
previous subcase, we let $u=s'$.
We have the following lemma (e.g., see Fig.~\ref{fig:subcase20new}), which is literally the same as  Lemma~\ref{lem:30} (the proof is also somewhat similar).

\begin{figure}[t]
\begin{minipage}[t]{\textwidth}
\begin{center}
\includegraphics[height=2.5in]{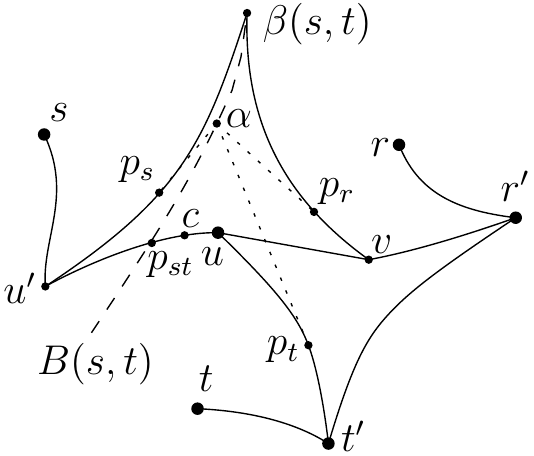}
\caption{\footnotesize Illustrating Lemma~\ref{lem:50}. In this example, $v'=v$.}
\label{fig:subcase20new}
\end{center}
\end{minipage}
\vspace{-0.15in}
\end{figure}

\begin{lemma}\label{lem:50}
\begin{enumerate}
\item
$\alpha$ must be in the geodesic triangle $\triangle(s,r,\beta(s,t))$.

\item
The apexes of $\triangle(s,r,\beta(s,t))$ are $u'$, $v'$, and $\beta(s,t)$, where $u'$ (resp., $v'$) is the junction vertex of and $\pi(s,r)$ and $\pi(s,\beta(s,t))$ (resp., $\pi(r,\beta(s,t))$) (in Fig.~\ref{fig:subcase20new}, $v'=v$).

\item
$p_s$ must be on the pseudo-convex chain $\pi(u',\beta(s,t))\cup \pi(u',v')$ and $\overline{\alpha p_s}$ is
tangent to the chain.

\item
$p_r$ must be on the pseudo-convex chain $\pi(v',\beta(s,t))\cup\pi(v',u')$ and $\overline{\alpha p_r}$ is tangent to the chain.

\item
$p_t$ must be on the pseudo-convex chain $\pi(t_{uv},u)\cup \pi(t_{uv},v)$ and $\overline{\alpha p_t}$ is
tangent to the chain, where $t_{uv}$ is the junction vertex of $\pi(t,u)$ and $\pi(t,v)$.

\item
$\overline{\alpha p_t}$ intersects $\overline{uv}$.
\end{enumerate}
\end{lemma}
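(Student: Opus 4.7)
My plan is to mirror the proof of Lemma~\ref{lem:30} as closely as possible, since the six statements are literally identical and the only change is that the edge $\overline{uv}$ containing $c$ has been replaced by the first edge $\overline{uv}$ of $\pi(s',r')$ (so $u=s'$ while $v$ is the neighbor of $s'$ on $\pi(s',r')$). Both $u$ and $v$ are polygon vertices in this primary case, so $\overline{uv}$ still splits $P$ into two subpolygons $P_1$ (not containing $t$) and $P_2$ (containing $t$). The case where $v=r=r'$ will be handled exactly as in Lemma~\ref{lem:40} by extending $\overline{uv}$ to $\partial P$; I would state this as an analog and invoke the bisector-preservation argument used there.

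I would establish the statements in the same order as in Lemma~\ref{lem:30}. First, I would prove $\alpha\in P_1$ by contradiction: assume $\alpha\in P_2$, and trace $B(s,r)$ from $\alpha$ into $\triangle(s,t,r)$ until it crosses either $\pi(s',t')$ or $\pi(t',r')$; in $\fvd(s,t,r)$, the portion of $B(s,r)$ touching $c$ would then be a true Voronoi edge separating the cells of $s$ and $r$ in a way that forces $C(s),C(t),C(r)$ to appear in clockwise order around $\partial P$, contradicting the counterclockwise ordering of $s,t,r$ around $\partial\gch(s,t,r)$ via Aronov's ordering lemma. Here the key adaptation is that $c\in \pi(s,s')\setminus\{s'\}$ rather than on $\overline{uv}$, but since $c$ still lies on $\pi(s,r)\subseteq P_2\cup \overline{uv}$ and $B(s,r)$ still intersects $\pi(s,r)$ only at $c$ (by Observation~\ref{obser:old}), the same partition of $B(s,r)$ into two halves — one in each subpolygon — applies. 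Next, statement (6) follows from the same general-position argument as Oh--Ahn (if $\overline{\alpha p_t}$ avoided $\overline{uv}$, a polygon vertex would lie on a bisector), and statement (5) is an immediate consequence of (6) together with the definition of $t_{uv}$.

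To obtain statements (1)--(4), I would set $q=\beta(s,t)$ and invoke Observation~\ref{obser:monotone} to locate $\alpha$ on $B(s,t)$ strictly between $q$ and the midpoint $p_{st}$ of $\pi(s,t)$. I then need $q\in P_1$: if $B(s,t)$ does not meet $\overline{uv}$, then since $\alpha\in P_1\cap B(s,t)$ the whole bisector (and hence $q$) lies in $P_1$; otherwise $B(s,t)$ crosses $\overline{uv}$ in a single point (Observation~\ref{obser:old}), and I argue $p_{st}\in P_2$ by noting that $B(s,t)$ cannot meet $\pi(s,s')\setminus\{s'\}$ again, forcing $p_{st}\in \pi(s',t)\subseteq P_2$, whence $q\in P_1$. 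Then $\alpha\in\triangle(s,t,q)$; pushing through the apex $q$ and the fact that $\pi(q,t)$ must cross $\overline{uv}$ at some point $p\in \pi(s,r)$, I get $\alpha\in \triangle(s,p,q)\subseteq \triangle(s,r,q)$, giving (1). Statement (2) is by definition of $u',v'$ as junction vertices, and statements (3) and (4) are immediate from (1) together with the pseudo-convex-chain structure of the sides of $\triangle(s,r,q)$.

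The main obstacle I anticipate is the adaptation of the $p_{st}\in P_2$ step: when $c\in \pi(s,s')\setminus\{s'\}$, the subpath $\pi(s,s')$ plays the role that the single edge $\overline{uv}$ played in Lemma~\ref{lem:30}, and I must carefully use that $B(s,t)$ crosses $\pi(s,r)$ at most once (applying Observation~\ref{obser:old} to $\pi(s,r)$ rather than to $\overline{uv}$) to conclude it avoids $\pi(s,s')\setminus\{s'\}$ entirely, which then funnels $p_{st}$ into $\pi(s',t)\subseteq P_2$. Once this is pinned down, the rest of the argument transfers verbatim from the proof of Lemma~\ref{lem:30}, and the reduction for the non-polygon-vertex case $v=r=r'$ goes through by the same extension construction as in Lemma~\ref{lem:40}.
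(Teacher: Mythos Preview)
Your plan has a genuine gap at the very place you flagged as the ``main obstacle,'' and the error propagates backwards through the argument.

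The decisive difference from Lemma~\ref{lem:30} is the location of $p_{st}$. Because $c\in\pi(s,s')\setminus\{s'\}\subseteq\pi(s,t)$ and $d(s,c)=d(r,c)>d(t,c)$, we get $d(s,t)=d(s,c)+d(c,t)<d(s,c)+d(c,r)=d(s,r)$, hence $d(s,p_{st})=d(s,t)/2<d(s,r)/2=d(s,c)$. Thus $p_{st}\in\pi(s,c)\setminus\{c\}$, which lies strictly on the $s$-side of $u=s'$, \emph{not} on $\pi(s',t)$. So your intended conclusion ``$p_{st}\in P_2$'' is false; in this case $p_{st}$ sits on the same side as $s$. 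Consequently your two-case split for $q\in P_1$ collapses: since $p_{st}\in\pi(s,r)$ and $B(s,t)\cap\pi(s,r)=\{p_{st}\}$ by Observation~\ref{obser:old}, and $p_{st}\notin\overline{uv}$, the bisector $B(s,t)$ never meets $\overline{uv}$ at all. This also breaks your assertion that ``$c$ still lies on $\pi(s,r)\subseteq P_2\cup\overline{uv}$'': the containment is simply not true (and in fact the paper proves $s\in P_1$), so you cannot partition $B(s,r)$ into ``one half in each subpolygon'' via $c$, and the Lemma~\ref{lem:30} tracing-from-$c$ argument for $\alpha\in P_1$ does not transfer --- there $c\in\overline{uv}$ sat on the boundary of $\triangle(s,t,r)$, whereas here $c$ lies on the tail $\pi(s,s')$ outside the triangle.

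The paper therefore reorganizes the proof. It first lets $P'$ be the subpolygon containing $s$ and uses the observation above ($p_{st}\in\pi(s,c)\subseteq P'$ and $p_{st}\notin\overline{uv}$) to conclude that $B(s,t)\subseteq P'$; an analogous argument gives $B(s,r)\subseteq P'$, whence $\alpha\in P'$. Only then does it argue, by a \emph{new} ordering-lemma contradiction (different from the one in Lemma~\ref{lem:30}, exploiting that if $P'=P_2$ then the polygon edges at $u$ force every $p\in B_1(s,r)$ to satisfy $d(r,p)>d(t,p)$), that $P'=P_1$. After that, statements (1)--(4) follow as in Lemma~\ref{lem:30}, and (6) is proved directly (not by citing Oh--Ahn) via the general-position argument that a polygon vertex on $\pi(\alpha,t)$ would otherwise land on a bisector. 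Your reduction for $v=r=r'$ via the Lemma~\ref{lem:40} extension is fine.
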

\begin{proof}
As $c\in \pi(s,s')\setminus\{s'\}$ and $u=s'$, $u$ cannot be $s$ and thus must be a polygon vertex. But $v$ can be either a polygon vertex or the site $r$. We assume that $v$ is a polygon vertex since the other case can be reduced to this case by the same  technique as in the proof of Lemma~\ref{lem:40}.

As both $u$ and $v$ are polygon vertices, $\overline{uv}$ divides $P$ into two sub-polygons; one of them, denoted by $P_1$, does not contain $t$ and we use $P_2$ to denote the other one. Let $P'$ be the one of $P_1$ and $P_2$ that contains $s$. We will argue later that $P'$ must be $P_1$.

We first show that the bisector $B(s,t)$ is in $P'$. Recall that $p_{st}$
is the middle point of $B(s,t)$. Since $c\in \pi(s,s')\setminus\{s'\}$ and
$d(s,c)=d(r,c)>d(c,t)$, $d(s,t)=d(s,c)+d(t,c)<d(s,c)+d(c,r)=d(s,r)$. Hence $d(s,p_{st})=d(s,t)/2<d(s,r)/2=d(s,c)$. Thus, $p_{st}\in \pi(s,c)\setminus\{c\}$. Note that $\pi(s,c)$ is in $P'$ since $\pi(s,c)\subseteq \pi(s,u)$ and $\pi(s,u)\in P'$ (the latter holds because
both $s$ and $u$ are in $P'$). Therefore, $p_{st}\in P'$. Further, as $p_{st}\in \pi(s,c)\subseteq \pi(s,u)\setminus\{u\}$, $p_{st}\not\in \overline{uv}$.
Since $p_{st}\in \pi(s,r)$, $B(s,t)$ does not intersect $\pi(s,r)$ other than $p_{st}$ by Observation~\ref{obser:old}. As $\overline{uv}\subseteq \pi(s,r)$, we obtain that $B(s,t)$ does not intersect $\overline{uv}$. Because $p_{st}$ is in $B(s,t)\cap P'$ and $B(s,t)$ does not intersect
$\overline{uv}$, $B(s,t)$ must be in $P'$.

Since $c\in \pi(s,s')\setminus\{s'\}$ and $c\in B(s,r)$, $B(s,r)$ is also in $P'$ by following the analysis similar to the above. As $\alpha$ is equidistant from $s$, $t$, and $r$, $\alpha$ is on both $B(s,t)$ and $B(s,r)$. Therefore, $\alpha$ is in $P'$.

\begin{figure}[t]
\begin{minipage}[t]{\textwidth}
\begin{center}
\includegraphics[height=2.5in]{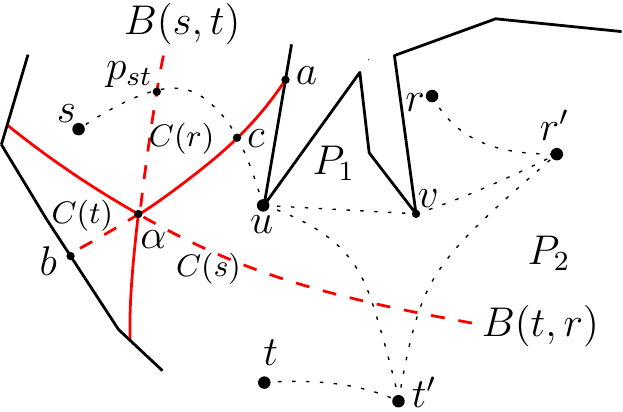}
\caption{\footnotesize Illustrating $\fvd(s,t,r)$, whose edges are depicted by thick (red) solid curves. The black solid segments are part of the boundary of $P$. The (red) dotted curves belong to bisectors but not on $\fvd(s,t,r)$. The red curve connecting $a$ and $b$ is $B(s,r)$; the portion between $c$ and $a$ (resp., $b$) is $B_1(s,r)$ (resp., $B_2(s,r)$).
The point $\alpha$ is the only vertex of $\fvd(s,t,r)$ because it is equidistant from all three sites. The cells $C(s)$, $C(t)$, and $C(r)$ are ordered clockwise around $\alpha$, while $s$, $t$, and $r$ are ordered counterclockwise around the boundary of their geodesic hull, a contradiction.}
\label{fig:inverseorder10}
\end{center}
\end{minipage}
\vspace{-0.15in}
\end{figure}

We next argue that $P'$ must be $P_1$. We assume that $s$, $t$, $r$ are ordered counterclockwise around the boundary of their geodesic hull (e.g., see Fig.~\ref{fig:subcase20new}). Assume to the contrary that $P'$ is $P_2$ (e.g., see Fig.~\ref{fig:inverseorder10}). The point $c$ divides $B(s,r)$ into two portions, one going above $\pi(s,r)$ and the other going below $\pi(s,r)$ (we intuitively assume that $\pi(s,r)$ from $s$ to $r$ goes ``horizontally'' from left to right); let $B_1(s,r)$ (resp., $B_2(s,r)$) be the first (resp., second) portion (e.g., see Fig.~\ref{fig:inverseorder10}, where the red curve between $c$ and $a$ is $B_1(s,r)$ and the red curve between $c$ and $b$ is $B_2(s,r)$). Since $P'$ is $P_2$, the two polygon edges of $P$ incident to $u$ must be from the ``above'' of $\pi(s,r)$. Hence, for any point $p\in B_1(s,r)$, both shortest paths $\pi(s,p)$ and $\pi(t,p)$ must contain $u$. Thus, $d(r,p)=d(r,u)+d(u,p)$ and $d(t,p)=d(t,u)+d(u,p)$. Since $d(r,c)>d(t,c)$ and both $\pi(r,c)$ and $\pi(t,c)$ contain $u$, $d(r,u)>d(t,u)$ holds. Therefore, $d(r,p)>d(t,p)$. This implies that no point on $B_1(s,r)$ is equidistant from $r$ and $t$, and thus $B_1(s,r)$ does not contain $\alpha$. As $\alpha\in B(s,r)$, we have $\alpha\in B_2(s,r)$. Consider the farthest Voronoi diagram $\fvd(s,t,r)$ of the three sites $s,t,r$ only (without considering other sites of $S$). Let $C(p)$ be the cell of $p\in \{s,t,r\}$ in the diagram.
As $d(c,s)=d(c,r)>d(c,t)$, $c$ belongs to the common boundary of $C(s)$ and $C(r)$, i.e., $c$ is on an edge of $\fvd(s,t,r)$. The point $\alpha$ divides $B(s,r)$ into two portions, one of which contains $c$. The above implies that the portion of $B(s,r)$ containing $c$ is an edge of $\fvd(s,t,r)$ (e.g., see Fig.~\ref{fig:inverseorder10}).
Recall that $p_{st}\in \pi(s,c)\setminus\{c\}$. Since $\pi(s,c)\subset\pi(s,r)$ and $c$ is the middle point of $\pi(s,r)$, we obtain that $d(s,p_{st})=d(t,p_{st})<d(r,p_{st})$, implying that $p_{st}\in C(r)$. The point $\alpha$ partitions $B(s,t)$ into two portions, one of which contains $p_{st}$; since $p_{st}\in C(r)$, the portion of $B(s,t)$ containing $p_{st}$ is not an edge of $F(s,t,r)$.
Then, one can verify that the three cells $C(s)$, $C(t)$, and $C(r)$ in $\fvd(s,t,r)$ are ordered clockwise along the boundary of $P$ (e.g., see Fig.~\ref{fig:inverseorder10}). According to Aronov~\cite{ref:AronovTh93}, $s$, $t$, and $r$ should also be ordered clockwise around the boundary of their geodesic hull. But this contradicts with the fact that $s$, $t$, and $r$ are ordered counterclockwise around the boundary of their geodesic hull.

The above proves that $P'$ is $P_1$. Since $B(s,t)\in P'$ and $\alpha\in B(s,t)$, we obtain that $\alpha\in P_1$.

We next argue that $\alpha$ must be in the geodesic triangle $\triangle(s,r,\beta(s,t))$. The argument is similar to the proof of Lemma~\ref{lem:30}, so we briefly discuss it. To simplify the notation, let $q=\beta(s,t)$. Since $q\in B(s,t)$ and $B(s,t)\in P_1$, $q$ is in $P_1$. By the same analysis as in the proof of Lemma~\ref{lem:30}, $\alpha$ is on $B(s,t)$ between $p_{st}$ and $q$ (e.g., see Fig.~\ref{fig:subcase20new}), and thus $\alpha$ is in the geodesic triangle $\triangle(s,t,q)$ and $q$ is an apex of $\triangle(s,t,q)$. Since $t\in P_2$ and $q\in P_1$, $\pi(q,t)$ must cross $\overline{uv}$ at a point $p$, and thus $\alpha$ is also in $\triangle(s,p,q)$ and $q$ is an apex of $\triangle(s,p,q)$. Further, since $p\in \overline{uv}\subseteq \pi(s,r)$, $\alpha$ is in $\triangle(s,r,q)$ and $q$ is an apex of $\triangle(s,r,q)$.

This proves the lemma statements (1) and (2). The lemma statements (3) and (4) also immediately follow.

Finally, we argue that $\overline{\alpha p_t}$ intersects $\overline{uv}$.
Assume to the contrary that this is not true. Since $\alpha\in P_1$ and $t\in P_2$, $\pi(\alpha,t)$ must cross $\overline{uv}$ at a point $z$. As $\overline{\alpha p_t}$ does not intersect $\overline{uv}$, $p_t$ must be a polygon vertex in $\pi(\alpha,z)$, which is subpath of $\pi(\alpha,t)$. As $z\in \overline{uv}\subseteq\pi(s,r)$, $\pi(\alpha,z)$ must be ``between'' $\pi(\alpha,s)$ and $\pi(\alpha,r)$. Since no two paths of $\pi(\alpha,s)$, $\pi(\alpha,z)$, and $\pi(\alpha,r)$ cross each other and $P$ is a simple polygon, $p_t$ must be in either $\pi(\alpha,s)$ and $\pi(\alpha,r)$. As $p_t\in \pi(\alpha,t)$ and $\alpha$ is equidistant from $s$, $t$, and $r$, $p_t$ is on the bisector between $t$ and one of $s$ and $r$. This contradicts with our general position assumption since $p_t$ is a vertex of $P$.

The above proves that $\overline{\alpha p_t}$ intersects $\overline{uv}$, i.e., the lemma statement~(6), which also leads to the lemma statement~(5).
\end{proof}

Due to the preceding lemma, our algorithm works as follows. First, we compute the vertices $u'$, $v'$, and $t_{uv}$, which can be done in $O(\log n)$ time by the GH data structure. Then we apply the tentative prune-and-search technique~\cite{ref:KirkpatrickTe95} on the three pseudo-convex chains specified in the lemma in a similar way as before to compute $\alpha$ in $O(\log n)$ time. Finally, we validate $\alpha$ in $O(\log n)$ time in a similar way as before. The overall time of the algorithm is $O(\log n)$.
Lemma~\ref{lem:10} is thus proved.

\paragraph{Remark.} As discussed above, there are two mistakes in the algorithm of Oh and Ahn~\cite{ref:OhVo20}  (Lemma 3.6): (1) In the subcase $c\in \pi(s',r')$, the case where not both $u$ and $v$ are polygon vertices is missed; (2) in the subcase $c\not\in \pi(s',r')$, they erroneously claimed that $\alpha$ does not exist. Both mistakes can be corrected with our new results. Indeed, in both cases we have proved that $\overline{\alpha p_t}$ intersects $\overline{uv}$. With this critical property, their algorithm of Lemma 3.6~\cite{ref:OhVo20} (which was originally designed for the case where $c\in \pi(s',r')$ and both $u$ and $v$ are polygon vertices) can be applied to compute $\alpha$ in $O(\log^2 n)$ time.
In this way, Lemma~3.6 of~\cite{ref:OhVo20} is remedied and thus all other results of~\cite{ref:OhVo20} that rely on Lemma~3.6 are not affected.

\footnotesize
 \bibliographystyle{plain}
\bibliography{reference}

\end{document}